\providecommand{\R}{\mathbb{R}}
\providecommand{\S}{\mathbb{S}}
\providecommand{\SO}{\mathbf{SO}}
\providecommand{\SOT}{\mathbf{SOT}}
\providecommand{\MR}{\mathbf{MR}}
\providecommand{\grpG}{\mathbf{G}}
\providecommand{\gothg}{\mathfrak{g}}
\providecommand{\so}{\mathfrak{so}}
\providecommand{\sot}{\mathfrak{sot}}
\providecommand{\calC}{\mathcal{C}}
\providecommand{\calM}{\mathcal{M}}
\providecommand{\calN}{\mathcal{N}}
\providecommand{\calP}{\mathcal{P}}
\providecommand{\calU}{\mathcal{U}}
\providecommand{\torSO}{\mathcal{SO}}
\providecommand{\torG}{\mathcal{G}}
\providecommand{\vecV}{\mathbb{V}}
\DeclareMathOperator{\diag}{diag}
\DeclareMathOperator{\Ad}{Ad}
\providecommand{\id}{\mathrm{id}} 
\providecommand{\tT}{\mathrm{T}} 
\providecommand{\td}{\mathrm{d}}
\providecommand{\tD}{\mathrm{D}}
\providecommand{\ddt}{\frac{\td}{\td t}}
\providecommand{\mr}[1]{{#1}^\circ} 
\providecommand{\scirc}{%
    \hbox{\fontfamily{\rmdefault}\fontsize{0.4\dimexpr(\f@size pt)}{0}\selectfont{\raisebox{-0.52ex}[0ex][-0.52ex]{$\circ$}}}}
\mathchardef\mhyphen="2D
\def \R {{\mathbb R}}
\def \S {{\mathbb S}}
\def \SO {{\mathbf{SO}}} 
\def \Ad {{\text{Ad}}}
\def \id {{\text{id}}}
\providecommand{\mr}{\mathring}
\providecommand{\ddt}{\frac{d}{dt}}
\newcommand{\slashedring}[1]{\overline{#1}\mathllap{\raisebox{-0.13ex}{$\mathring{\phantom{#1}}$}}}
\def \e {{\mathbf{e}}}
\newtheorem{theorem}{Theorem}
\newtheorem{proposition}{Proposition}
\newtheorem{definition}{Definition}
\newtheorem{assumption}{Assumption}
\newtheorem{lemma}{Lemma}
\newtheorem{remark}{Remark}
\title{
Exploiting polar symmetry in designing equivariant observers for vision-based motion estimation}
    \author{ \href{https://orcid.org/0000-0003-1116-7415}{\includegraphics[scale=0.06]{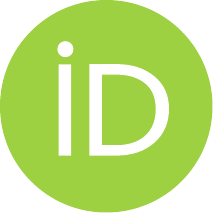}\hspace{1mm}Tarek Bouazza} \\
	I3S, CNRS, Université Côte d'Azur\\
    Sophia Antipolis, France \\
	\texttt{bouazza@i3s.unice.fr} \\
	\And
    \href{https://orcid.org/0000-0002-7803-2868}{\includegraphics[scale=0.06]{orcid.pdf}\hspace{1mm}Robert Mahony} \\
	Systems Theory and Robotics Group\\
    Australian National University\\
	ACT, 2601, Australia \\
	\texttt{Robert.Mahony@anu.edu.au} \\
    \And
	\href{https://orcid.org/0000-0002-7779-1264}{\includegraphics[scale=0.06]{orcid.pdf}\hspace{1mm}Tarek Hamel} \\
    I3S, CNRS, Université Côte d'Azur\\
    and Insitut Universitaire de France \\
    Sophia Antipolis, France \\
	\texttt{thamel@i3s.unice.fr} \\
}
\begin{document}
\maketitle

\begin{abstract}
    Accurately estimating camera motion from image sequences poses a significant challenge in computer vision and robotics.
    Many computer vision methods first compute the essential matrix associated with a motion and then extract orientation and normalized translation as inputs to pose estimation, reconstructing the scene scale (that is unobservable in the epipolar construction) from separate information.
    In this paper, we design a continuous-time filter that exploits the same perspective by using the epipolar constraint to define pseudo-measurements.
    We propose a novel polar symmetry on the pose of the camera that makes these measurements equivariant.
    This allows us to apply recent results from equivariant systems theory to estimating pose.
    We provide a novel explicit persistence of excitation condition to characterize observability of the full pose, ensuring reconstruction of the scale parameter that is not directly observable in the epipolar construction.
\end{abstract}

\section{Introduction}
	
Accurately estimating the motion of a camera from visual data is a fundamental challenge in robotics and computer vision.
The problem is key to a range of applications including visual odometry, target tracking, 3D scene reconstruction, etc.
One of the key ideas in classical computer vision used in extracting motion information from a video sequence relies on computing the so-called \emph{essential matrix} between consecutive images \cite{longuet1981computer}.
The essential matrix relates pairs of associated image points between two images taken by the same camera as it moves in the scene \cite{hartley2003multiple}.
The essential matrix can be computed from a set of at least five \cite{nister2004efficient} (or eight \cite{hartley1997defense})
matched image points, although in practice it is usually computed using nonlinear optimization over many image points  \cite{ma2001optimization,botterill2011refining}.
The essential matrix captures the essential information of the camera motion available from the visual feed.

Once the essential matrix is obtained, it can be decomposed into a rotation and a normalized translation, leading to four possible pairs of translations and rotations \cite{hartley2003multiple}.
The correct pair can be identified through a chirality check, ensuring that corresponding features are visible in front of both cameras.
The resulting pose estimation has the correct rotation and a normalized translation, that is a direction of translation.
The actual magnitude of the camera translation is linked to estimation of the scene scale and is unobservable for the image sequence alone, depending on some additional measurements and a separate motion estimation process.

Filter-based algorithms are most effective when they are posed directly on measurements, image points in this case, rather than derived information such as an essential matrix estimation.
Recent work \cite{white2020iterative} introduced an iterative algorithm that optimizes rotation and normalized translation between frames to compute relative pose based on minimizing the epipolar constraint.
In parallel, \cite{hua2020relative} proposed a deterministic Riccati observer that involves velocity measurements and uses the epipolar constraint to define a \emph{pseudo-measurement} \cite{julier2007kalman} to estimate the camera pose directly.
In \cite{hua2020relative}, and a following paper  \cite{gintrand2022uniform}, an observability analysis is provided to show that uniform observability is guaranteed provided that the translational motion of the camera is sufficiently exciting.

In this paper, we revisit the problem of designing a filter to estimate the motion of a camera by exploiting the epipolar constraint as a pseudo-measurement.
We propose a new symmetry $\SO(3) \times \SOT(3)$ for the camera pose that we term the \emph{polar symmetry} using the scaled orthogonal transformations Lie group $\SOT(3)$ \cite{van2019geometric}.
This symmetry is ideally suited to dealing with systems with unknown scale and was first proposed to model landmarks with bearing only visual measurements \cite{van2019geometric}.
For the first time (to the authors understanding) we apply the polar symmetry to the translation of the system pose.
This allows us to demonstrate equivariance of the pseudo-measurement associated with the epipolar constraint.
Moreover, we demonstrate equivariance of the system kinematics for measured velocity and from this determine a lifted system on the symmetry group.
The following observer construction is based on the equivariant filter (EqF) methodology \cite{van2020equivariant,van2020equivariant2}.
Convergence of the filter depends on excitation of the velocity signal and we provide a comprehensive theoretical analyses of observability and stability.
The resulting algorithm provides a powerful tool for tracking camera pose from visual data in the case where the velocity is measured.

The remainder of this paper is organized as follows.
The notation and preliminaries are defined in Section \ref{sec:prelims}.
The problem is formally stated in Section \ref{sec:problemform} and the proposed equivariant observer is derived in Section \ref{sec:observer}.
The observability and stability analysis is provided in Section \ref{sec:observability}.
Simulation results are presented in Section \ref{sec:simresults} followed by concluding remarks in Section \ref{sec:conclusion}.

\section{Preliminaries}
\label{sec:prelims}
\subsection{Mathematical notation}

Let $\calM$ be a smooth manifold, $\tT_{\xi}\calM$ denotes the tangent space at $\xi \in \calM$.
Given a differentiable function between smooth manifolds $h : \calM \rightarrow \calN$, its differential at $\xi_0$ is written as $$\tD_\xi \vert_{\xi_0} h(\xi): \tT_{\xi_0} \calM \rightarrow \tT_{\xi_0} \calN.$$

Let $f : \calM \rightarrow \calN$ and $g : \calN \rightarrow \calN'$ be linear maps, $f \cdot g$ denotes the composition of $f$ and $g$.

Let $\grpG$ be a matrix Lie group and $\gothg$ its Lie algebra. The group identity is denoted $\id \in \grpG$, left and right translation are written
$L_X(Y) := XY$ and $R_X(Y) := YX,$
respectively, which induce the corresponding mappings on $\gothg$, $\td L_X : \gothg \rightarrow \tT_X \grpG$ and $\td R_X : \gothg \rightarrow \tT_X \grpG$, defined by $\td L_X U := XU$ and $\td R_X U := UX$, respectively. The Adjoint map $\Ad : \grpG \times \gothg \rightarrow \gothg $ is defined by
$\Ad_X(U) := X U X^{-1} $ for any $X \in \grpG$ and $U \in \gothg$.

The Lie algebra $\gothg$ is isomorphic to a vector space $\R^{\dim\gothg}$. The \emph{wedge} $(\cdot)^\wedge_{\gothg} : \R^{\dim\gothg} \rightarrow \gothg$ and \emph{vee} $(\cdot)^\vee_{\gothg} : \gothg \rightarrow \R^{\dim\gothg} $ operators are linear isomorphisms that satisfy $(u^\vee)^\wedge = u$ for all $u \in \gothg$.
The exponential map $\exp  : \gothg \rightarrow \grpG$ defines a local diffeomorphism from a neighborhood of $0 \in \gothg$ to a neighborhood of $\id \in \grpG$, and its inverse (when defined) is the logarithmic map $\log : \grpG \rightarrow \gothg$.

The $\grpG$-torsor, denoted $\torG$, is defined as the set of elements of $\grpG$ (underlying manifold), but without the group structure.

A right group action is a smooth map $\phi : \grpG \times \calM \rightarrow \calM$ that satisfies the \emph{compatibility} and \emph{identity} properties
\begin{align*}
    \phi\left(Y, \phi(X, \xi) \right) &= \phi\left( X Y, \xi \right), &
    \phi(\id, \xi) &= \xi,
\end{align*}
for all $\xi \in \calM$ and $X\in \grpG$. For any $\xi \in \calM$ the partial maps $\phi_X : \calM \rightarrow \calM$ and $\phi_\xi : \grpG \rightarrow \calM$ are defined as $\phi_X(\xi) := \phi(X, \xi)$ and $\phi_\xi(X) := \phi(X, \xi)$, respectively.
A group action is called \emph{transitive} if for all $\xi_1, \xi_2 \in \calM$, there exists $X\in \grpG$ such that $\phi(X, \xi_1) = \xi_2$.

The special orthogonal group of 3D rotations is denoted by $\SO(3)$ with Lie algebra $\so(3)$. They are defined by
\begin{align*}
        \SO(3) &:= \{ R \in \R^{3 \times 3} \;\vert\; RR^\top = R^\top R = I_3, \det(R) = 1 \}, \\
        \mathfrak{so}(3) &:= \{ a^\times \;\vert\; a \in \R^{3} \}, \;
        a^\times := \begin{pmatrix}
            0    & -a_3 & a_2  \\
            a_3  & 0    & -a_1 \\
            -a_2 & a_1  & 0
        \end{pmatrix},
    \end{align*}
where $a^\times$ denotes the skew-symmetric matrix associated with the vector (cross) product, satisfying $a^\times b = a \times b$ for all $ b \in \R^3$. For any $a \in \R^3$, $a_{\mathfrak{so}(3)}^\wedge = a^\times$.

The scaled orthogonal transformations group denoted by $\SOT(3)$, with Lie algebra $\sot(3)$, is the direct product of $\SO(3)$ and the multiplicative positive real numbers $\MR(1) = \{r \in \R \; | \; r > 0\}$.
They are defined by
\begin{align*}
    \SOT(3) &:= \{ r Q \; | \; Q \in \SO(3), r > 0 \}, \\
    \sot(3) &:= \left\{ \left(a, b\right)^\wedge_{\sot(3)} \; | \; a \in \R^3, b \in \R  \right\}, \\
    \left(a, b\right)^\wedge_{\sot(3)} &:= a^\times + b I_3.
\end{align*}

Denote the $2$-sphere by $\S^2 := \{p \in \R^{3} \mid \vert p \vert = 1\}$, $\e_1, \e_2, \e_3$ denotes the canonical basis of $\R^3$
and $\S^+(k)$ denotes the set of positive definite $(k \times k)$ matrices.

For any $u \in \R^3 \backslash \{0\}$ the projector $\Pi_u$ that projects vectors onto the subspace of $\R^3$ orthogonal to $u$ is given by
$$ \Pi_u := I_3 - \frac{uu^\top}{|u|^2}= - \frac{u^\times u^\times}{|u|^2}. $$

\subsection{Uniform observability of linear time-varying systems}
Consider the following linear time-varying (LTV) system
\begin{equation} \label{eqn:ltv}
    \begin{cases}
        \dot{\mathbf{x}} = A(t) \mathbf{x} + B(t) \mathbf{u}, \\
        \mathbf{y} = C(t) \mathbf{x}
    \end{cases}
\end{equation}
with state $\mathbf{x} \in \R^n$, input $\mathbf{u} \in \R^l$ and output $\mathbf{y} \in \R^m$. $A, B, C$ are continuous, bounded matrix-valued functions.
\begin{definition}[\emph{Uniform observability}] \label{def:observability}
The pair $(A(t), C(t))$ of system \eqref{eqn:ltv} is called uniformly observable if there exist $\delta, \mu > 0$ such that,
for all $t\geq0$,
\begin{equation} \label{eqn:positivegramian}
        W(t, t+ \delta) := \frac{1}{\delta} \int_t^{t+\delta} \Phi^\top(s,t)C^\top(s)C(s)\Phi(s,t) ds \geq \mu I_n
    \end{equation}
where $\Phi(s, t)$ is the transition matrix associated with A(t):
$\ddt \Phi(s, t) = A(t)\Phi(s, t), \Phi(t, t) = I_n$.
\end{definition}
The matrix-valued function $W(t, t + \delta)$ is called \emph{the observability Gramian} of system \eqref{eqn:ltv}. Verifying the uniform observability directly from the Gramian is typically very challenging. The following lemma borrowed from \cite{morin2017uniform} provides a sufficient condition for uniform observability that will be instrumental in this paper.
\begin{lemma}[see \cite{morin2017uniform}] \label{morin}
If there exists a matrix-valued function
$M(\cdot)$ of dimension $(p \times n)$ $(p \geq 1)$ composed of row vectors of $N_0 = C, N_k = N_{k-1}A + \dot{N}_{k-1}, k = 1, \dots$, such that for
some positive numbers $\bar{\delta}, \bar{\mu}$ and $\forall t \geq 0$
\begin{equation}
    \frac{1}{\bar{\delta}} \int_t^{t+\bar{\delta}} |\det\left( M(s)^\top M(s)\right)| ds \geq \bar{\mu}
    \end{equation}
then $W(t, t + \delta)$ satisfies condition \eqref{eqn:positivegramian}.
\end{lemma}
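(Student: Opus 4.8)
The plan is to prove the quantitative lower bound on the observability Gramian $W(t,t+\bar\delta)$ directly, without ever computing the transition matrix $\Phi$, by exploiting the fact that the matrices $N_k$ generate the successive time-derivatives of the input-free output along trajectories of \eqref{eqn:ltv}. Fix $t_0\ge 0$ and $x_0\in\R^n\setminus\{0\}$, write $I:=[t_0,t_0+\bar\delta]$, and set $g(s):=C(s)\Phi(s,t_0)x_0$. Differentiating and using $\ddt\Phi(s,t_0)=A(s)\Phi(s,t_0)$, an easy induction gives
\[
\ddt\!\bigl(N_{k-1}(s)\Phi(s,t_0)\bigr)=\bigl(\dot{N}_{k-1}(s)+N_{k-1}(s)A(s)\bigr)\Phi(s,t_0)=N_k(s)\Phi(s,t_0),
\]
so $g^{(k)}(s)=N_k(s)\Phi(s,t_0)x_0$ for all $k$. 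Since by hypothesis each row of $M$ is a row of some $N_k$ with $0\le k\le q$ (where $q$ is the largest index occurring), every entry of $M(s)\Phi(s,t_0)x_0$ is an entry of some $g^{(k)}(s)$, hence $\int_I |M(s)\Phi(s,t_0)x_0|^2\,ds\le p\sum_{k=0}^q\|g^{(k)}\|_{L^2(I)}^2$, while $x_0^\top W(t_0,t_0+\bar\delta)x_0=\tfrac1{\bar\delta}\|g\|_{L^2(I)}^2$. Thus it suffices to show $\|g\|_{L^2(I)}\ge c\,|x_0|$ for some $c>0$ independent of $t_0,x_0$.

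The second step produces a lower bound for $\int_I |M(s)\Phi(s,t_0)x_0|^2\,ds$ from the determinant hypothesis. Boundedness of $A,C$ and their derivatives — so that $N_0,\dots,N_q$, hence $M$, are uniformly bounded, say $\|M(s)\|\le\beta$ — gives $\det(M^\top M)\le\beta^{2(n-1)}\lambda_{\min}(M^\top M)$, whence $M(s)^\top M(s)\ge\beta^{-2(n-1)}\det\!\bigl(M(s)^\top M(s)\bigr)I_n$. Combining with $|\Phi(s,t_0)x_0|\ge e^{-\|A\|_{\infty}\bar\delta}|x_0|$ on $I$ (by a Gr\"onwall estimate) and integrating, the assumption $\tfrac1{\bar\delta}\int_I\det(M^\top M)\,ds\ge\bar\mu$ yields $\int_I |M(s)\Phi(s,t_0)x_0|^2\,ds\ge c_*|x_0|^2$ with $c_*:=\bar\mu\bar\delta\,\beta^{-2(n-1)}e^{-2\|A\|_{\infty}\bar\delta}>0$, uniform in $t_0,x_0$.

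The crux is to convert ``$g$ small in $L^2(I)$'' into ``$g^{(k)}$ small in $L^2(I)$'' for $0\le k\le q$. From $g^{(k)}(s)=N_k(s)\Phi(s,t_0)x_0$ we get uniform pointwise bounds $|g^{(k)}(s)|\le\gamma|x_0|$ on $I$ for $0\le k\le q+1$. A Gagliardo--Nirenberg interpolation inequality on an interval of length $\bar\delta$, with constant depending only on $\bar\delta$ and $q$ (hence uniform in $t_0$),
\[
\|g^{(k)}\|_{L^2(I)}\le C\Bigl(\|g\|_{L^2(I)}^{1-k/(q+1)}\|g^{(q+1)}\|_{L^2(I)}^{k/(q+1)}+\|g\|_{L^2(I)}\Bigr),\qquad 0\le k\le q,
\]
together with $\|g\|_{L^2(I)},\|g^{(q+1)}\|_{L^2(I)}\le\gamma|x_0|$, gives $\sum_{k=0}^q\|g^{(k)}\|_{L^2(I)}^2\le C'|x_0|^{2q/(q+1)}\|g\|_{L^2(I)}^{2/(q+1)}$. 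Chaining the three displayed estimates, $c_*|x_0|^2\le pC'|x_0|^{2q/(q+1)}\|g\|_{L^2(I)}^{2/(q+1)}$, hence $\|g\|_{L^2(I)}\ge c\,|x_0|$ with $c:=(c_*/pC')^{(q+1)/2}$, and therefore $x_0^\top W(t_0,t_0+\bar\delta)x_0\ge(c^2/\bar\delta)|x_0|^2$. Since $t_0,x_0$ are arbitrary, \eqref{eqn:positivegramian} holds with $\delta=\bar\delta$ and $\mu=c^2/\bar\delta$.

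I expect the third step to be the main obstacle: controlling all derivatives $g^{(k)}$, $k\le q$, by the $L^2$-norm of $g$ alone is only possible because $g^{(q+1)}$ is uniformly bounded, which rests on the (implicit) standing hypothesis that $A$ and $C$ are differentiable up to order $q$ with bounded derivatives, so that $N_1,\dots,N_{q+1}$ exist and are uniformly bounded in $t$; the uniformity of the interpolation constant in $t_0$ is precisely what delivers the time-uniform $\mu$ demanded by Definition \ref{def:observability}. The remaining ingredients — the derivative recursion for $N_k$, the $\det$-to-$\lambda_{\min}$ inequality, and the exponential bound on $\Phi$ over the finite window $I$ — are routine. (A purely qualitative variant shows $W(t_0,t_0+\bar\delta)\succ 0$ pointwise, since $x_0^\top W x_0=0$ forces $g\equiv 0$, hence $g^{(k)}\equiv 0$, hence $M\Phi x_0\equiv 0$, hence $x_0=0$ at any $s$ with $\det(M^\top M)>0$; but this does not by itself give the uniform lower bound, which is why the interpolation estimate is needed.)
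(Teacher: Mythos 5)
The paper does not prove this lemma at all: it is stated as a result borrowed from \cite{morin2017uniform}, so there is no in-paper argument to compare against. Your proof is a correct, self-contained substitute, and it is worth recording how it works. The skeleton is sound: the identity $g^{(k)}(s)=N_k(s)\Phi(s,t_0)x_0$ follows from the stated recursion and $\tfrac{d}{ds}\Phi(s,t_0)=A(s)\Phi(s,t_0)$; the bound $\det(M^\top M)\le \lambda_{\min}(M^\top M)\,\lambda_{\max}(M^\top M)^{n-1}\le \beta^{2(n-1)}\lambda_{\min}(M^\top M)$ converts the integral determinant hypothesis into a uniform lower bound on $\int_I|M(s)\Phi(s,t_0)x_0|^2\,ds$ once combined with the Gr\"onwall estimate $|\Phi(s,t_0)x_0|\ge e^{-\|A\|_\infty\bar\delta}|x_0|$; and the Landau--Kolmogorov/Gagliardo--Nirenberg interpolation on the fixed-length window $I$ is exactly the right tool to show that smallness of $\|g\|_{L^2(I)}$ forces smallness of all $\|g^{(k)}\|_{L^2(I)}$, $k\le q$, given the uniform bound on $g^{(q+1)}$. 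The exponent bookkeeping in the chained inequality $c_*|x_0|^2\le pC'|x_0|^{2q/(q+1)}\|g\|_{L^2(I)}^{2/(q+1)}$ checks out, and translation invariance of the interpolation constant is what delivers uniformity in $t_0$, hence the time-uniform $\mu$ of Definition \ref{def:observability}. What your route buys over the cited reference (whose argument is of the qualitative contradiction/compactness type) is an explicit admissible pair $(\delta,\mu)=(\bar\delta,\,c^2/\bar\delta)$ in terms of $\bar\mu$, $\bar\delta$, $p$, $n$, $q$ and the bounds on $A$ and on the $N_k$. The one point you must keep explicit, and which you do correctly flag, is the standing regularity hypothesis: the lemma's recursion already presupposes that $A$ and $C$ are differentiable up to the order needed for $N_0,\dots,N_{q+1}$ to exist and be bounded (the paper only asserts continuity and boundedness of $A,B,C$), and without boundedness of $N_{q+1}$, i.e.\ of $g^{(q+1)}$, the interpolation step collapses and the purely qualitative positivity of $W$ is all that survives.
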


\subsection{Background on epipolar geometry}
Consider a moving monocular
camera observing a 3D scene.
Let $\{ \mathring{\calC} \}$ be the initial frame of reference and $\{ \calC_t \}$ the current (camera-fixed) frame. Let $R \in \SO(3)$ represent the orientation of frame $\{ \calC_t \}$ with respect to frame $\{ \mathring{\calC} \}$ and let $x \in \R^3$ denote the translation of frame $\{ \calC_t \}$ with respect to frame $\{ \mathring{\calC} \}$ expressed in
frame $\{ \mathring{\calC} \}$.
\medbreak
Given a set of $m$ unknown landmarks that are continuously observed by the camera,  we denote $\mathring{P}_i \in \R^3$ (resp. $P_i \in \R^3$) the 3-D coordinates of the $i$-th landmark with respect to $\{ \mathring{\calC} \}$ (resp. $\{ \calC_t \}$) expressed in $\{ \mathring{\calC} \}$ (resp. $\{ \calC_t \}$). The camera measurements are modeled as the bearing measurements of the landmarks $p_i := P_i/|P_i| \in \S^2$ (resp. $\mathring{p}_i := \mathring{P}_i/|\mathring{P}_i| \in \S^2$) in the frame $\{C_t\}$ (resp. $\{\mathring{C}\}$).

Using the relations $ P_i = R^\top (\mathring{P}_i - x)$, the following epipolar constraint can be deduced
\begin{equation} \label{eq:epipolar}
    \mathring{p}_i x_d^\times R p_i = 0, \quad (i=1, \dots, m)
\end{equation}
where $x_d := x/|x|$ denotes the bearing component of the translation $x$. This constraint can be expressed in terms of the \emph{essential matrix} $E := x_d^\times R$, as
\begin{equation}
    \mathring{p}_i^\top E p_i = 0, \quad (i=1, \dots, m).
\end{equation}
Rather than estimating $E$ from the epipolar constraints and decomposing it into rotation $R$ and normalized
translation $x_d$, we propose an observer design to directly estimate the pose (i.e., $R$ and $x$), exploiting the motion of the camera.

\section{Problem formulation} \label{sec:problemform}
The pose of the camera has kinematics
\begin{align} \label{eq:kine}
    \dot{R} &= R \Omega^\times, &
    \dot{x} &= R v,
\end{align}
where $\Omega, v \in \R^3$ denote the angular and linear velocities of the camera expressed in the body-fixed frame $\{\calC_t\}$.
We assume that both velocities are measured.
\begin{assumption} \label{assumpx}
    The camera's translation with respect to the reference frame $\{\mathring{C}\}$ does not vanish; i.e. $x(t) \neq 0, \forall t \geq 0$.
\end{assumption}

Define the \emph{state space} manifold as $\calM := \torSO(3) \times \R^3 \backslash \{0\}$ with state $\xi =(R, x) \in \calM$ and the \emph{velocity space} $\mathbb{V} := \R^3 \times \R^3$ with elements $u = (\Omega, v) \in \mathbb{V}$. That is, one can write the kinematics \eqref{eq:kine} as a system
\begin{align*}
    (\dot{R}, \dot{x}) = \dot{\xi} &= f(\xi, u)=(R\Omega^\times,R v) .
\end{align*}

The corresponding bearings are regarded as parameters. We define the parameter spaces $\calP_i := \S^2 \times \S^2$, where each element is $\mathbf{p}_i = (\mathring{p}_i, p_i) \in \calP_i$, $i=1,\dots,m$. Then, the total space is $\calP := (\calP_i)^m$ with elements $\mathbf{p} = \left[\mathbf{p}_1, \dots, \mathbf{p}_m \right]^\top$.

The epipolar constraint \eqref{eq:epipolar} can be directly used to define the following measurement function $h: \calM \times \calP \rightarrow \R^m$%
\begin{align}
    h(\xi; \mathbf{p}) &:= \left( h^1(\xi; \mathbf{p}_1), \dots, h^m(\xi; \mathbf{p}_m) \right)^\top, \notag \\
    h^i(\xi; \mathbf{p}_i) &:= \mathring{p}_i^\top x_d^\times R p_i, \quad (i=1, \dots, m), \label{eq:outputs}
\end{align}
with virtual output $y := \left(y_1, \dots, y_m\right)^\top$, $y_i \equiv 0$.
This is the pseudo-measurement construction \cite{julier2007kalman} that is well known in the Kalman filtering literature.

\subsection{Symmetry and equivariance}
Define the Lie group $\grpG := \SO(3) \times \SOT(3)$ and denote an element of $\grpG$ as $X = (S,Q,r)$, $S \in \SO(3), Q \in \SO(3), r \in \MR(1)$ with group identity $\id_{\grpG} =(I_3, I_3, 1)$.

The design of an equivariant observer for the system \eqref{eq:kine} requires identifying key symmetry properties of the group $\grpG$ \cite{van2020equivariant2}. These symmetries are given below\footnote{The proofs of the following propositions are provided in the Appendix \ref{a: proofs}.}.
\begin{proposition} \label{thm:symphi}
    The mapping $\phi: \grpG \times \calM \rightarrow \calM$ defined by
    \begin{equation}
        \phi\left( (S,Q,r), (R, x) \right) := \left( Q^\top R S, r^{-1}Q^\top x \right)
    \end{equation}
    is a transitive right group action of $\grpG$ on $\calM$.
\end{proposition}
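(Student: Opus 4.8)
The plan is to check the three defining conditions in turn: (i) that $\phi$ is a well-defined smooth map $\grpG \times \calM \to \calM$, (ii) that it satisfies the identity and compatibility axioms of a right action recalled in Section~\ref{sec:prelims}, and (iii) that it is transitive. Step (i) is immediate: $Q^\top R S$ is a product of three rotations and so lies in $\torSO(3)$ (whose underlying set is that of $\SO(3)$), while $|r^{-1}Q^\top x| = r^{-1}|x| > 0$ for $r > 0$ and $x \neq 0$, so the translational component stays in $\R^3 \backslash \{0\}$; smoothness is clear since $\phi$ is assembled from matrix multiplication, transposition and the smooth operation $r \mapsto r^{-1}$ on $\MR(1)$.

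For (ii), the identity axiom is a one-line substitution: $\phi(\id_{\grpG}, (R,x)) = (I_3^\top R I_3,\, 1^{-1} I_3^\top x) = (R,x)$. For compatibility I would first record that the product on $\grpG = \SO(3) \times \SOT(3)$ is componentwise, so that for $X = (S_1,Q_1,r_1)$ and $Y = (S_2,Q_2,r_2)$ one has $XY = (S_1 S_2,\, Q_1 Q_2,\, r_1 r_2)$ (consistent with the $\SOT(3)$ product $(r_1 Q_1)(r_2 Q_2) = r_1 r_2 Q_1 Q_2$). Then I would expand both sides:
\[
\phi(XY,(R,x)) = \bigl(Q_2^\top Q_1^\top R\, S_1 S_2,\; r_2^{-1} r_1^{-1} Q_2^\top Q_1^\top x\bigr) = \phi\bigl(Y, \phi(X,(R,x))\bigr),
\]
the middle equality being a direct computation. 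The one point deserving care is that transposing reverses the order of the rotation factors — it is precisely this reversal that turns the ``left-looking'' formula $Q^\top R S$ into a genuine \emph{right} action.

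For (iii), given $(R_1,x_1), (R_2,x_2) \in \calM$ I would construct $X = (S,Q,r)$ with $\phi(X,(R_1,x_1)) = (R_2,x_2)$ in three moves: first set $r := |x_1|/|x_2| > 0$ to match the norms; then, using transitivity of $\SO(3)$ on $\S^2$, pick $Q \in \SO(3)$ with $Q^\top(x_1/|x_1|) = x_2/|x_2|$, which gives $r^{-1} Q^\top x_1 = x_2$; finally set $S := R_1^\top Q R_2 \in \SO(3)$ so that $Q^\top R_1 S = R_2$. This exhibits the required group element. I do not anticipate a genuine obstacle anywhere: the whole argument is bookkeeping, and the only place to slip is the order/transpose conventions in the compatibility identity and the product convention in $\SOT(3)$, which I would pin down explicitly at the start.
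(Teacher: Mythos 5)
Your proof is correct and follows essentially the same route as the paper: direct verification of the identity and compatibility axioms by componentwise computation, then transitivity via explicit construction of a group element. In fact your transitivity step (choosing $r = |x_1|/|x_2|$, then $Q$ rotating the translation directions, then $S = R_1^\top Q R_2$) is more explicit than the paper's, which only sketches reachability from a reference point $(I_3, x_0)$ and leaves the construction as "straightforward."
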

Note that while $\phi$ is transitive on the state space $\calM$, it does not exhibit transitivity on the full pose space $\mathcal{SE}(3)$ due to the disjoint nature of the orbits $\mathcal{SO}(3) \times \{0\}$ and $\calM$ under the $\phi$ action.
\begin{proposition} \label{thm:sympsi}
    The mapping $\psi: \grpG \times \vecV \rightarrow \vecV$ defined by \begin{equation}
        \psi\left( (S,Q,r), (\Omega, v) \right) := \left( S^\top \Omega, r^{-1}S^\top v \right)
    \end{equation}
    is a right group action of $\grpG$ on $\vecV$.
\end{proposition}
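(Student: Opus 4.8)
The plan is to verify directly the two axioms defining a right group action — the identity property and the compatibility (associativity-type) property — and then note smoothness. First I would pin down the group law on $\grpG$: since $\grpG = \SO(3)\times\SOT(3)$ is a direct product and $\SOT(3)$, written in the pair coordinates $(Q,r)$, has product $(Q_1,r_1)(Q_2,r_2) = (Q_1 Q_2,\, r_1 r_2)$, the multiplication on $\grpG$ is coordinate-wise: for $X = (S_1,Q_1,r_1)$ and $Y = (S_2,Q_2,r_2)$ one has $XY = (S_1 S_2,\, Q_1 Q_2,\, r_1 r_2)$, with identity $\id_\grpG = (I_3,I_3,1)$.

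Next, the identity axiom is immediate: $\psi(\id_\grpG, (\Omega,v)) = (I_3^\top \Omega,\, 1^{-1} I_3^\top v) = (\Omega,v)$. For the compatibility axiom $\psi(Y, \psi(X,u)) = \psi(XY,u)$, I would start from $\psi(X,(\Omega,v)) = (S_1^\top \Omega,\, r_1^{-1} S_1^\top v)$ and apply $\psi(Y,\cdot)$, obtaining $(S_2^\top S_1^\top \Omega,\, r_2^{-1} r_1^{-1} S_2^\top S_1^\top v)$; using $(S_1 S_2)^\top = S_2^\top S_1^\top$ and $(r_1 r_2)^{-1} = r_1^{-1} r_2^{-1}$ (scalars commute), this is exactly $\psi((S_1 S_2, Q_1 Q_2, r_1 r_2),(\Omega,v)) = \psi(XY,u)$. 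The order reversal characteristic of a right action is produced automatically by the transpose. Finally, $\psi$ is assembled from matrix transposition, matrix–vector multiplication, scalar multiplication, and inversion $r \mapsto r^{-1}$ on $\MR(1) = \{r>0\}$, all smooth, so $\psi$ is smooth; the three facts together give the claim.

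I do not anticipate a genuine obstacle here: the only point demanding care is the bookkeeping in the steps above — in particular confirming that the $Q$-component of $\grpG$ acts trivially on $\vecV$ (so $\psi$ is very far from transitive, though no transitivity is asserted), and that $S^\top$ and $r^{-1}$ are composed in the order that makes $\psi_Y \circ \psi_X = \psi_{XY}$ rather than $\psi_{YX}$. Structurally the action mirrors the translational block $r^{-1} Q^\top x$ of $\phi$ in Proposition \ref{thm:symphi}, namely a rotation by an $\SO(3)$ factor followed by a scaling by $r^{-1}$, the only difference being which $\SO(3)$ factor ($S$ here, $Q$ there) performs the rotation; this correspondence is what will later make the kinematics \eqref{eq:kine} equivariant for the pair $(\phi,\psi)$.
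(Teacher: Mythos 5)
Your proposal is correct and follows essentially the same route as the paper's proof: direct verification of the identity axiom and of compatibility via $(S_1S_2)^\top = S_2^\top S_1^\top$ and $(r_1r_2)^{-1} = r_1^{-1}r_2^{-1}$, yielding $\psi_Y\circ\psi_X = \psi_{XY}$. The additional remarks on smoothness and on the trivial action of the $Q$-component are fine but not needed beyond what the paper records.
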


\begin{proposition}[\emph{Equivariance}] \label{thm:equivsys}
    The kinematics \eqref{eq:kine} are equivariant under the group actions $\phi$ and $\psi$ in the sense that
    $$ \tD \phi_X f(\xi, u) = f(\phi(X,\xi), \psi(X, u) ) $$
    for any $X \in \grpG$, $\xi \in \calM$, $u \in \vecV$.
\end{proposition}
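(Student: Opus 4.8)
The plan is to verify the identity $\tD\phi_X f(\xi,u) = f(\phi(X,\xi),\psi(X,u))$ by a direct computation, the only non-routine ingredient being the conjugation identity $S^\top a^\times S = (S^\top a)^\times$, valid for all $S\in\SO(3)$ and $a\in\R^3$ (it is just the statement that rotations commute with the cross product). First I would fix $X=(S,Q,r)\in\grpG$ and observe that the partial map $\phi_X:\calM\to\calM$ is, componentwise, the restriction to $\torSO(3)\times(\R^3\setminus\{0\})$ of a pair of linear maps on the ambient spaces: $R\mapsto Q^\top R S$ on $\R^{3\times3}$ and $x\mapsto r^{-1}Q^\top x$ on $\R^3$. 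Hence, for any tangent vector $(\dot R,\dot x)\in\tT_{(R,x)}\calM\subset\R^{3\times3}\times\R^3$,
$$ \tD\phi_X\,(\dot R,\dot x) = \left( Q^\top \dot R\, S,\; r^{-1}Q^\top \dot x \right). $$
One should note in passing that the right-hand side does lie in $\tT_{\phi_X(R,x)}\calM$, since $S^\top(R^\top\dot R)S$ is skew-symmetric whenever $R^\top\dot R$ is, conjugation by an orthogonal matrix preserving skew-symmetry.

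Next I would substitute $f(\xi,u)=(R\Omega^\times, Rv)$ into this formula. The translational component is immediate: $r^{-1}Q^\top(Rv)=r^{-1}Q^\top R v$. For the rotational component I would factor $Q^\top(R\Omega^\times)S=(Q^\top R S)(S^\top\Omega^\times S)$ and apply the conjugation identity to get $S^\top\Omega^\times S=(S^\top\Omega)^\times$, so that
$$ \tD\phi_X f(\xi,u) = \left( (Q^\top R S)(S^\top\Omega)^\times,\; r^{-1}Q^\top R v \right). $$

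It then remains to evaluate the right-hand side. By Propositions~\ref{thm:symphi} and~\ref{thm:sympsi}, $\phi(X,\xi)=(Q^\top R S,\, r^{-1}Q^\top x)$ and $\psi(X,u)=(S^\top\Omega,\, r^{-1}S^\top v)$, so by definition of $f$,
$$ f(\phi(X,\xi),\psi(X,u)) = \left( (Q^\top R S)(S^\top\Omega)^\times,\; (Q^\top R S)(r^{-1}S^\top v) \right). $$
The rotational components already coincide; for the translational components, cancelling $SS^\top=I_3$ gives $(Q^\top R S)(r^{-1}S^\top v)=r^{-1}Q^\top R v$, which matches the expression obtained above. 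Since $X\in\grpG$, $\xi\in\calM$, $u\in\vecV$ were arbitrary, this establishes equivariance.

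I do not anticipate a genuine obstacle here: the argument is a short chain of matrix manipulations. The one place that warrants care is the identification of $\tT_R\torSO(3)$ and the resulting explicit form of $\tD\phi_X$ on the rotational factor; once that is settled, everything follows from $SS^\top=I_3$ and the skew-conjugation identity. It is worth remarking that this computation also makes transparent why $\psi$ must act on $v$ by $v\mapsto r^{-1}S^\top v$: these are precisely the operations needed to pass $\tD\phi_X$ through the drift term $Rv$, just as the bare $S^\top$-action on $\Omega$ (with no scale factor) is what is needed to pass it through $R\Omega^\times$.
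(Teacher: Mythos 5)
Your proposal is correct and follows essentially the same route as the paper's proof: both exploit the componentwise linearity of $\phi_X$ to identify $\tD\phi_X$ with $\phi_X$ itself, then insert $SS^\top = I_3$ and use the conjugation identity $S^\top \Omega^\times S = (S^\top \Omega)^\times$ to match the two sides. Your remark justifying that the image lies in the correct tangent space is a small addition of care not spelled out in the paper, but the argument is the same.
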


\begin{proposition} \label{thm:symtheta}
    Define the right action $\theta: \grpG \times \calP_i \rightarrow \calP_i$ as \begin{equation}
        \theta\left( (S,Q,r), (\mathring{p}_i, p_i) \right) := \left( Q^\top \mathring{p}_i, S^\top p_i \right) .
    \end{equation}
Then, the measurement function \eqref{eq:outputs} is invariant under the actions $\phi$ and $\theta$ in the sense that
\begin{align} \label{eq:invmeasurement}
h\left(\phi(X, \xi); \theta(X, \mathbf{p})\right) = h(\xi; \mathbf{p})
\end{align}
for any $X\in \grpG$, $\xi \in \calM$, and $\mathbf{p} \in \calP$.
\end{proposition}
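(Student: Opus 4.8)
The plan is to verify the invariance \eqref{eq:invmeasurement} componentwise and then assemble the stacked statement, since $h = (h^1,\dots,h^m)^\top$ and $\theta$ acts diagonally on $\calP = (\calP_i)^m$. Fix $X = (S,Q,r) \in \grpG$, $\xi = (R,x) \in \calM$ and a single parameter $\mathbf{p}_i = (\mathring{p}_i, p_i) \in \calP_i$, and abbreviate $(R',x') := \phi(X,\xi) = (Q^\top R S,\, r^{-1}Q^\top x)$ and $(\mathring{p}_i', p_i') := \theta(X,\mathbf{p}_i) = (Q^\top \mathring{p}_i,\, S^\top p_i)$.

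The one point worth isolating is the behaviour of the translation bearing $x_d = x/|x|$ under $\phi$. Because $r > 0$ and $Q$ is orthogonal, $|x'| = |r^{-1}Q^\top x| = r^{-1}|x|$, so the scale cancels in the normalization and $x_d' = x'/|x'| = Q^\top x_d$. This is precisely the feature of the polar symmetry that makes the epipolar construction compatible with an unknown scale: $r$ acts on $x$ but is invisible to the bearing-type quantity $x_d$, which is all the essential matrix sees.

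Substituting into $h^i(\xi';\mathbf{p}_i') = (\mathring{p}_i')^\top (x_d')^\times R' p_i'$ then gives $\mathring{p}_i^\top\, Q\,(Q^\top x_d)^\times\, Q^\top R\, S\, S^\top p_i$. Using $S S^\top = I_3$ together with the conjugation identity $Q a^\times Q^\top = (Qa)^\times$, valid for all $Q \in \SO(3)$ and $a \in \R^3$ (so that $Q (Q^\top x_d)^\times Q^\top = x_d^\times$), this collapses to $\mathring{p}_i^\top x_d^\times R p_i = h^i(\xi;\mathbf{p}_i)$. Stacking over $i = 1,\dots,m$ yields \eqref{eq:invmeasurement}. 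For completeness I would also record that $\theta$ is a right group action: it is the diagonal product of the standard transpose actions of the two $\SO(3)$ factors on $\S^2$, and compatibility $\theta(Y,\theta(X,\mathbf{p}_i)) = \theta(XY,\mathbf{p}_i)$ follows from $(Q_1 Q_2)^\top = Q_2^\top Q_1^\top$ and likewise for $S$, with the identity property immediate from $\id_{\grpG} = (I_3,I_3,1)$.

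There is no genuine obstacle here; the statement reduces to a short direct computation. The only things to be careful about are bookkeeping of transposes and the use of $r \in \MR(1)$ to write $|r^{-1}Q^\top x| = r^{-1}|x|$ without an absolute value on $r^{-1}$ — a step that would fail for a general nonzero scalar and is exactly why $\MR(1)$ (rather than $\R\setminus\{0\}$) is the right multiplier group for this symmetry.
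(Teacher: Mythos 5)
Your proof is correct and follows essentially the same route as the paper: a direct componentwise substitution in which the scale $r^{-1}$ cancels in the normalization of $x$, the identity $(Qa)^\times = Q a^\times Q^\top$ absorbs the $Q$ factors, and $SS^\top = I_3$ removes the rest, after which the stacked statement follows immediately. Your explicit check that $\theta$ is a right action is a minor addition the paper leaves as "straightforward to verify."
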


The inherent structural invariance \eqref{eq:invmeasurement} of the epipolar constraint \eqref{eq:epipolar} stems from the measurement equivariance under the transformation of a given pose $\xi$ by $\phi_X$ that, in turn, induces a corresponding transformation $\theta_X$ of the bearing pairs $\mathbf{p}$.

\subsection{Lift of the kinematics to the Lie algebra}
In this section, we lift the kinematics from $\calM \times \vecV$ to the Lie algebra $\gothg := \so(3) \times \sot(3)$ of $\grpG$. The existence of the lift is guaranteed by the transitive nature of $\phi$.
\begin{proposition}[\emph{Equivariant lift}] \label{thm:lift}
The smooth map $\Lambda : \calM \times \vecV \rightarrow \gothg$ defined as
\begin{align} \label{eq:lift}
\Lambda((R,x), (\Omega,v)) \notag := \left( \left(\Omega - \frac{(R^\top x)^\times v}{|x|^2} \right)^\times, - \frac{\left(x^\times Rv \right)^\times}{|x|^2}, - \frac{ x^\top Rv }{|x|^2} \right)
\end{align}
is a lift for the system \eqref{eq:kine}. That is, $\Lambda$ satisfies
\begin{align}
\tD_{X \vert \id} \phi_\xi \left[\Lambda(\xi, u) \right] = f(\xi, u).
\end{align}
In addition, the lift $\Lambda$ is equivariant, i.e.
\begin{align*}
    \mathrm{Ad}_{X^{-1}} (\Lambda(\xi, u)) = \Lambda(\phi_X (\xi),\psi_X(u)),
\end{align*}
for all $X \in \grpG$, $\xi \in \calM$ and $u \in \vecV$.
\end{proposition}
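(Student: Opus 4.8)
The plan is to establish both assertions by direct computation, using repeatedly the $\SO(3)$ identities $M\,a^\times M^\top = (Ma)^\times$ and $(Ma)\times(Mb) = M(a\times b)$ for $M \in \SO(3)$, $a,b\in\R^3$, together with the vector triple product $(a\times b)\times c = b(a^\top c) - a(b^\top c)$. Smoothness of $\Lambda$ is immediate since $|x|>0$ on $\calM$, so no denominator vanishes.

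For the lift identity I would first compute $\tD_{X\vert\id}\phi_\xi$ explicitly. Writing a generic element of $\gothg = \so(3)\times\sot(3)$ as $(\alpha^\times,\ \beta^\times + \gamma I_3)$ with $\alpha,\beta\in\R^3$, $\gamma\in\R$, and taking the curve $t\mapsto X(t) = \big(\exp(t\alpha^\times),\,\exp(t\beta^\times),\,e^{t\gamma}\big)$ through $\id_\grpG$, one differentiates
\[
\phi_\xi(X(t)) = \big(\exp(t\beta^\times)^\top R\,\exp(t\alpha^\times),\ e^{-t\gamma}\exp(t\beta^\times)^\top x\big)
\]
at $t=0$ to get $\tD_{X\vert\id}\phi_\xi\big[(\alpha^\times,\beta^\times+\gamma I_3)\big] = \big(R\alpha^\times - \beta^\times R,\ -\gamma x - \beta^\times x\big)$. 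Equating this with $f(\xi,u) = (R\Omega^\times, Rv)$ gives one equation in $\so(3)$ and one in $\R^3$. Substituting the components of $\Lambda$, the rotational equation reduces to $\beta^\times = -\tfrac{1}{|x|^2}R\big((R^\top x)^\times v\big)^\times R^\top$, which equals $-\tfrac{1}{|x|^2}(x^\times Rv)^\times$ upon applying $R\,w^\times R^\top = (Rw)^\times$ with $Rw = R\big((R^\top x)\times v\big) = x\times(Rv)$; and the translational equation reduces to $-\gamma x - \beta^\times x = \tfrac{1}{|x|^2}\big((x^\top Rv)x + (x\times Rv)\times x\big) = Rv$, using $(x\times Rv)\times x = |x|^2 Rv - (x^\top Rv)x$. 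Both are then identities.

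For equivariance of the lift I would use that $\Ad$ on $\grpG = \SO(3)\times\SOT(3)$ acts componentwise and that the scalar factor of an $\SOT(3)$ element cancels under conjugation, so $\Ad_{X^{-1}}(\mu,\nu) = (S^\top\mu S,\ Q^\top\nu Q)$ for $X=(S,Q,r)$. Writing $\Lambda(\xi,u) = (\alpha^\times,\beta^\times+\gamma I_3)$, this gives $\Ad_{X^{-1}}\Lambda(\xi,u) = \big((S^\top\alpha)^\times,\ (Q^\top\beta)^\times + \gamma I_3\big)$. On the other side, setting $(\tilde R,\tilde x) := \phi_X(\xi) = (Q^\top R S, r^{-1}Q^\top x)$ and $(\tilde\Omega,\tilde v) := \psi_X(u) = (S^\top\Omega, r^{-1}S^\top v)$, one computes $|\tilde x|^2 = r^{-2}|x|^2$, $\tilde R^\top\tilde x = r^{-1}S^\top R^\top x$, and $\tilde R\tilde v = r^{-1}Q^\top Rv$; substituting into the definition of $\Lambda$, every power of $r$ cancels in each of the three components and the conjugation identities move $S^\top$ and $Q^\top$ outside, so that $\Lambda(\tilde R,\tilde x,\tilde\Omega,\tilde v) = \big((S^\top\alpha)^\times,\ (Q^\top\beta)^\times + \gamma I_3\big)$, which matches. (Equivariance could alternatively be argued abstractly from the equivariance of $f,\phi,\psi$ and a uniqueness property of lifts, but the direct check is shorter here.)

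I do not anticipate a genuine obstacle. The only real care needed is bookkeeping: tracking the powers of $r$ and the transpose/conjugation identities in the equivariance step, and correctly matching the two defining equations of the lift condition to the $\so(3)$ component and to the skew and scalar pieces of the $\sot(3)$ component of $\Lambda$.
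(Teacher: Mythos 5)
Your proposal is correct and follows essentially the same route as the paper's proof: computing $\tD_{X\vert\id}\phi_\xi$ along exponential curves through the identity, substituting the components of $\Lambda$ and closing the verification with the standard identities $R\,w^\times R^\top=(Rw)^\times$ and the vector triple product, then checking equivariance by a direct componentwise $\Ad$ computation in which the scalar factor $r$ cancels. The only difference is cosmetic — you evaluate $\Lambda(\phi_X(\xi),\psi_X(u))$ and simplify it to match $\Ad_{X^{-1}}\Lambda(\xi,u)$, whereas the paper manipulates the $\Ad$ side toward the other — so the two arguments are the same computation read in opposite directions.
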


The lift \eqref{eq:lift} provides the necessary structure to construct a \emph{lifted system} on the symmetry group that projects down onto the original system dynamics via $\phi_{\mr{\xi}}: \grpG \rightarrow \calM$. Here, $\mr{\xi} = (\mr{R},\mr{x}) \in \calM$ denotes an arbitrarily fixed element of $\calM$ termed \emph{the origin}. The lifted system is given by
\begin{align} \label{eq:liftedsys}
    \dot{X} = X \Lambda\left( \phi_{\mr{\xi}}(X), u \right), \quad X(0) \in \grpG
\end{align}
written in terms of $(S,Q,r)$ as follows
\begin{align*}
\ddt (S,Q,r) = \left( S \left(\Omega - \frac{(R^\top x)^\times v}{|x|^2} \right)^\times, - Q  \frac{ \left(x^\times Rv\right)^\times}{|x|^2}, - r \frac{ x^\top Rv}{|x|^2} \right),
\end{align*}
with $(R, x) = \phi_{(\mr{R},\mr{x})}\left((S, Q, r)\right)$.
\section{Equivariant observer design} \label{sec:observer}
This section presents an \emph{equivariant observer} designed on the symmetry group $\grpG$ and uses the lifted system \eqref{eq:liftedsys} as its internal model. It follows the equivariant filter (EqF) design approach as outlined in \cite{van2020equivariant2} and \cite{mahony2022observer}.
\medbreak
Let $\hat{X} = (\hat{S}, \hat{Q}, \hat{r}) \in \grpG$ be the observer state with kinematics
\begin{equation} \label{eq:equivobserver}
   \dot{\hat{X}} := \hat{X}\Lambda(\phi_{\mr{\xi}}(\hat{X}), u) + \Delta_t \hat{X}, \quad \hat{X}(0) \in \grpG
\end{equation}
where $\Delta_t \in \gothg$ is the correction term to be determined. 
\subsection{Origin choice and local coordinates}
The choice of the origin of the translational component $\mr{x} \in \R^3 \backslash \{0\}$ can be arbitrary. Then, without loss of generality, let $(\mr{R}, \mr{x}) = (I_3, \e_3) \in \calM$ be the chosen state origin. At any time $t$,
the EqF state estimate is given by
\begin{align}
\hat{\xi}(t) = \phi_{\mr{\xi}}(\hat{X}(t)) = \left( \hat{Q}^\top \hat{S}, \hat{r}^{-1}\hat{Q}^\top \e_3 \right).
\end{align}

Let $e = (e_R, e_t) := \phi_{\hat{X}^{-1}}(\xi) \in \calM$ denote the global \emph{equivariant error}. The EqF is derived by linearizing the dynamics of $e$ about $\mr{\xi}$ which requires a chart of local coordinates for the state.

The following  \emph{polar} parameterization was previously  introduced in \cite{van2023eqvio}:
\begin{align*}
\zeta_{\SOT(3)}(q) &:= \begin{pmatrix}
   \arccos\left(\frac{q_3}{|q|}\right) \frac{q_2}{|\e_3 \times q|} \\
   \arccos\left(\frac{q_3}{|q|}\right) \frac{-q_1}{|\e_3 \times q|}
 \\
 -\log(|q|)
\end{pmatrix}, \notag \\
 \zeta^{-1}_{\SOT(3)}(z) &:= \exp_{\SOT(3)}\left( \left(\begin{pmatrix}
     z_1 \;\; z_2  \;\; 0 \;\; z_3
 \end{pmatrix}^\top \right)^\wedge_{\sot(3)} \right)^{-1} \e_3.
\end{align*}
It provides normal coordinates for $\R^3$ about $\e_3$ with respect to the right action of $\SOT(3)$ defined in Lemma \ref{thm:symphi}.
Define the map $\vartheta: \calU_{\mr{\xi}} \subset \calM \rightarrow \R^6$
\begin{align}
\vartheta(e) &:= \left( \left(\log_{\SO(3)} (e_R) \right)^\vee_{\so(3)}, \zeta_{\SOT(3)}(e_t) \right)
\end{align}
to be the coordinate chart for $\calM$ about $\mr{\xi}$, where $\calU_{\mr{\xi}}$ is a large neighborhood of $\mr{\xi}$. The map $\vartheta$ provides normal coordinates for $\calM$ about $\mr{\xi}$ with respect to the action $\phi$,
its inverse is given by
\begin{equation}
\vartheta^{-1}(\varepsilon) := \left( \exp_{\SO(3)} \left( \varepsilon_R^\times \right), \zeta^{-1}_{\SOT(3)}(\varepsilon_t)\right).
\end{equation}

Then, linearizing the error dynamics about $\mr{\xi}$ yields \cite{van2020equivariant2}
\begin{align*}
    \dot{\varepsilon} &= A_t \varepsilon + O(|\varepsilon|^2), \\
    A_t &= \tD_{e \vert \mr{\xi}} \vartheta(e) \cdot \tD_{E \vert \id} \phi_{\mr{\xi}}(E) \cdot \tD_{e \vert \mr{\xi}} \Lambda (e,\psi_{\hat{X}^{-1}}(u)) \cdot \tD_{\varepsilon \vert 0} \vartheta^{-1}(\varepsilon).
\end{align*}
\medbreak
Define the output residual $\tilde{y} :=  h(\xi, \mathbf{p}) - h(\hat{\xi}, \mathbf{p}) \in \R^m$.
Then using \eqref{eq:invmeasurement}, one has
\begin{align*}
        \tilde{y} &= h(\xi, \mathbf{p}) - h(\hat{\xi}, \mathbf{p}), \\
        &= h\left(\vartheta^{-1}(\varepsilon), \theta_{\hat{X}^{-1}}(\mathbf{p})\right)  - h(\mr{\xi}, \theta_{\hat{X}^{-1}}(\mathbf{p})).
\end{align*}
with $h^i(\mr{\xi}, \theta_{\hat{X}^{-1}}(\mathbf{p}_i)) = \hat{\slashedring{p}}_i^\top \e_3^\times \hat{\bar{p}}_i$, $(\hat{\slashedring{p}}_i, \hat{\bar{p}}_i) = \theta_{\hat{X}^{-1}}(\mathring{p}_i, p_i) = (\hat{Q} \mathring{p}_i, \hat{S} p_i)$, $i=1, \dots, m$.

The linearized output about $\varepsilon = 0$ is
\begin{align*}
    \tilde{y} &= C_t \varepsilon + O(|\varepsilon|^2), \\
    C_t &= \tD_{e \vert \mr{\xi}} h(e; \theta_{\hat{X}^{-1}}(\mathbf{p}_i))  \cdot \tD_{\varepsilon \vert 0} \vartheta^{-1} (\varepsilon).
\end{align*}
The resulting state matrix $A_t$ and output matrix $C_t$ are
\begin{equation} \label{eqn:linearizedmat}
\begin{cases}
    A_t = \begin{pmatrix}
    (\e_3^\times \hat{\mathring{v}})^\times & 0_{3,2} & 0_{3,1} \\
    0_{2,3} &  \hat{\mathring{v}}_3 I_2 & \begin{bmatrix}
        \hat{\mathring{v}}_2 \\ -\hat{\mathring{v}}_1
    \end{bmatrix} \\
    0_{1,3} & \begin{bmatrix}
        -\hat{\mathring{v}}_2 & \hat{\mathring{v}}_1
    \end{bmatrix} & \hat{\mathring{v}}_3
\end{pmatrix} \in \R^{6\times 6}, \\
C_t = \begin{pmatrix}
        \e_3^\top \hat{\slashedring{p}}_1^\times\hat{\bar{p}}_1^\times  & -\e_2^\top \hat{\slashedring{p}}_1^\times \hat{\bar{p}}_1  & \e_1^\top \hat{\slashedring{p}}_1^\times \hat{\bar{p}}_1 & 0 \\
        \vdots & \vdots & \vdots & \vdots \\
        \e_3^\top \hat{\slashedring{p}}_m^\times\hat{\bar{p}}_m^\times & -\e_2^\top \hat{\slashedring{p}}_m^\times \hat{\bar{p}}_m  & \e_1^\top \hat{\slashedring{p}}_m^\times \hat{\bar{p}}_m & 0  \\
\end{pmatrix} \in \R^{m\times 6}.
\end{cases}
\end{equation}
with $\hat{\mathring{v}} = \hat{r} \hat{S} v$.
Then, the correction term $\Delta_t$ is given by
\begin{align} 
\Delta_t &:= \left(\Sigma C_t^\top N_t^{-1}\tilde{y}\right)^\wedge_\gothg,
\\ \label{eq: riccati}
\dot{\Sigma} &:= A_t \Sigma + \Sigma A_t^\top + M_t - \Sigma C_t^\top N_t^{-1} C_t\Sigma, 
\end{align}
where $\Sigma \in \S^+(6)$ is the Riccati gain with initial
value $\Sigma(0) = \Sigma_0$, $M_t \in \S^+(6)$ and $N_t \in \S^+(m)$ are
continuous matrix-valued functions. In a stochastic setting (Kalman filtering), $M_t$ and $N_t$ are interpreted as covariance matrices of additive noise on the state and output, respectively \cite{van2020equivariant2}.

\section{Observability and stability analysis} \label{sec:observability}
In this section, we outline the necessary conditions to ensure the local exponential stability of the linearized origin error of the observer \eqref{eq:equivobserver}. According to \cite[Corollary 3.2]{hamel2017riccati}, the exponential stability relies on the uniform observability in the sense of Definition \ref{def:observability} of the pair $(\mathring{A}_t, \mathring{C}_t)$ obtained by setting $\hat{X}(t) = X(t)$ in the expressions of $(A_t,C_t)$ in \eqref{eqn:linearizedmat}.

In view of \eqref{eqn:linearizedmat}, the expression of $\mathring{A}_t$ corresponds to substituting $\hat{\mathring{v}}$ by $\mathring{v} = r S v$ in $A_t$. Let $(\slashedring{p}_i, \bar{p}_i) = (Q \mathring{p}_i, S p_i)$, $\mathring{C}_t$ can be expressed as $\mathring{C}_t = \bar{C}_t L$,
    with $L = \begin{pmatrix}
           I_5 & 0_{5,1}
        \end{pmatrix}$ and
        \begin{equation} \label{eqn:cbar}
        \bar{C}_t = \begin{pmatrix}
            \e_3^\top \slashedring{p}_1^\times\bar{p}_1^\times  & -\e_2^\top \slashedring{p}_1^\times \bar{p}_1  & \e_1^\top \slashedring{p}_1^\times \bar{p}_1   \\
            \vdots & \vdots & \vdots \\
            \e_3^\top \slashedring{p}_m^\times\bar{p}_m^\times & -\e_2^\top \slashedring{p}_m^\times \bar{p}_m  & \e_1^\top \slashedring{p}_m^\times \bar{p}_m  \\
        \end{pmatrix}
        \end{equation}
        
        Let $\mathring{\Phi}$ denote the state transition matrix associated with $\mathring{A}_t$. The observability Gramian associated with $(\mathring{A}_t,\mathring{C}_t)$ is
\begin{equation} \label{eqn:gramian0}
    W(t, t+ \delta) = \frac{1}{\delta} \int_t^{t+\delta} \mathring{\Phi}^\top(s,t)\mathring{C}(s)^\top \mathring{C}(s)\mathring{\Phi}(s,t) ds 
\end{equation}
\begin{assumption} \label{assump1}
    There are at least five landmarks $(m \geq 5)$ that are uniformly non-collinear, such that there exists at least a triplet $\mathring{p}_1, \mathring{p}_2, \mathring{p}_3$ and $c > 0$ that satisfy $(\mathring{p}_1 \times \mathring{p}_2)^\top \mathring{p}_3 \geq c$. Additionally, if these landmarks are  positioned across a \textbf{horopter curve} (the intersection of a circular cylinder and an elliptic cone \cite{hamel2017riccati}), the origin of the reference frame $\{\mathring{\calC}\}$ is uniformly distant from the horopter origin.
    \end{assumption}
    
    \begin{lemma} \label{Cinvertible}
        If Assumption \ref{assump1} holds and there exists $\epsilon > 0$ such that for all $t\geq 0$, $|x| \geq \epsilon$,
        then $\bar{C}_t$ given in \eqref{eqn:cbar} is full rank and $\bar{H}_t := \bar{C}_t^\top\bar{C}_t$ is invertible and well-conditioned.
    \end{lemma}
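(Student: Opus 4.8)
The plan is to prove the stronger conclusion that $\bar H_t := \bar C_t^\top \bar C_t \in \S^+(5)$ is uniformly positive definite, i.e.\ that $\mu I_5 \preceq \bar H_t \preceq \bar\mu I_5$ for all $t \ge 0$ with constants $0 < \mu \le \bar\mu$ independent of $t$; this is also exactly what the observability analysis that follows needs. The upper bound is immediate, since by \eqref{eqn:cbar} every entry of $\bar C_t$ is a product of the unit bearings $\slashedring{p}_i = Q\mathring{p}_i$, $\bar p_i = S p_i$ and $m$ is fixed. So the work is the uniform lower bound, which is equivalent to uniform full column rank of $\bar C_t \in \R^{m\times 5}$, hence to $\bar H_t$ being invertible and uniformly well-conditioned.

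First I would turn the pointwise rank question into a statement in epipolar geometry. After the orthogonal change of frame the constraint \eqref{eq:epipolar} becomes $\slashedring{p}_i^\top \e_3^\times \bar p_i = 0$, so each pair is consistent with the base essential matrix $E_0 := \e_3^\times$. Reading off \eqref{eqn:cbar}, and recalling that $C_t$ in \eqref{eqn:linearizedmat} is the differential of $h$ at the origin composed with the chart, the $i$-th row of $\bar C_t$ is, under a fixed linear identification of $\R^5$ with the tangent space $T_{E_0}\mathcal{E}$ of the essential manifold $\mathcal{E} := \{x_d^\times R : x_d \in \S^2,\ R \in \SO(3)\}$ (a $5$-dimensional manifold; explicitly $w = (\alpha,\beta_1,\beta_2) \mapsto \delta E = \e_3^\times\alpha^\times + \beta_1\e_1^\times + \beta_2\e_2^\times$, up to the chart's sign conventions), nothing but the linear functional $\delta E \mapsto \slashedring{p}_i^\top \delta E\, \bar p_i$; the vanishing last column of $C_t$ records that the scale direction is annihilated by $L$. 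Therefore $\bar C_t$ drops rank at time $t$ iff there is a nonzero $\delta E \in T_{E_0}\mathcal{E}$ with $\slashedring{p}_i^\top \delta E \,\bar p_i = 0$ for all $i = 1,\dots,m$, i.e.\ iff the $m$ correspondences fail to determine the essential matrix even to first order.

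The pointwise non-degeneracy is then the classical content of the five-point theory: $m \ge 5$ correspondences in general position determine the essential matrix up to a discrete set, and a rank drop of the above differential (a one-parameter family of compatible essential matrices) can occur only when the landmarks lie on a critical ruled quadric through the two optical centers — the horopter being its degenerate curve case, cf.\ \cite{hamel2017riccati}. Assumption \ref{assump1} is designed to exclude exactly this, with a uniform margin: the non-collinear triplet with $(\mathring{p}_1 \times \mathring{p}_2)^\top \mathring{p}_3 \ge c > 0$ enforces general position, and the requirement that the origin of $\{\mathring{\calC}\}$ stay uniformly distant from the horopter origin, together with $|x| \ge \epsilon$, rules out the critical-surface degeneracy quantitatively; hence $\bar H_t \succ 0$ for every $t$. (Alternatively one may observe that, after the orthogonal substitution by $(S,Q)$, $\bar H_t$ matches the information matrix analysed in \cite{hamel2017riccati} and invoke that result directly.)

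To pass from pointwise positivity to uniformity, note that under the standing hypotheses the admissible data $(\slashedring{p}_i, \bar p_i)_{i=1}^m$ — subject to $\slashedring{p}_i^\top \e_3^\times \bar p_i = 0$, $(\mathring{p}_1 \times \mathring{p}_2)^\top \mathring{p}_3 \ge c$, $|x| \ge \epsilon$ and the horopter-distance bound — range over a compact subset of $(\S^2 \times \S^2)^m$, all these being closed conditions, and that the map from these data to $\lambda_{\min}(\bar H_t)$ is continuous and, by the previous step, strictly positive there, hence bounded below by a single $\mu > 0$; with the upper bound this gives the asserted uniform conditioning. The main obstacle is the pointwise non-degeneracy step — identifying precisely the bearing configurations that make the differential of $\mathcal{E} \ni E \mapsto (\slashedring{p}_i^\top E \bar p_i)_i$ non-injective and checking that Assumption \ref{assump1}, in particular its horopter clause, rules them out with a margin uniform in $t$; this requires genuine epipolar-geometry input (essentially that of \cite{hamel2017riccati}), after which the remainder is the routine compactness-and-continuity argument above.
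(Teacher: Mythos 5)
Your proposal takes essentially the same route as the paper at the crux: the pointwise full-rank statement is reduced to a non-degeneracy property of the epipolar correspondences, and the genuine geometric content --- that the rank can only drop for critical (horopter-type) configurations, which Assumption \ref{assump1} excludes --- is delegated to \cite{hamel2017riccati}, exactly as the paper's proof does (there, $\bar{C}_t w = 0$ is rewritten using $|P_i|\bar{p}_i = |\mathring{P}_i|\slashedring{p}_i - |x|\e_3$ and reduced to the system of equations treated in \cite[Section IV-B]{hamel2017riccati}). Your framing is more conceptual: identifying the rows of $\bar{C}_t$ with the functionals $\delta E \mapsto \slashedring{p}_i^\top \delta E\, \bar{p}_i$ on the tangent space of the essential manifold at $\e_3^\times$, so that a rank drop means a one-parameter ambiguity of the essential matrix to first order. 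That identification is correct (one checks $\e_3^\top \slashedring{p}_i^\times \bar{p}_i^\times w_1 = \slashedring{p}_i^\top \e_3^\times w_1^\times \bar{p}_i$, and the $w_2$ entries correspond to perturbations of the translation direction orthogonal to $\e_3$), and it buys a cleaner interpretation, but it does not change the substance of the argument or remove the dependence on the external critical-configuration analysis.

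The one genuine weakness is your uniformity step. The compactness-and-continuity argument does not close as stated: the constraints $|x|\ge\epsilon$ and the horopter-distance clause are conditions on the scene, not on the bearings, and the hypotheses place no bound on landmark depths. As $|\mathring{P}_i|/|x| \to \infty$ one has $\bar{p}_i \to \slashedring{p}_i$, so the last two columns of $\bar{C}_t$ in \eqref{eqn:cbar} tend to zero and $\lambda_{\min}(\bar{H}_t) \to 0$, while the limiting bearing pairs $(\slashedring{p}_i,\slashedring{p}_i)$ are not realizable by any admissible scene (they would force every landmark onto the baseline). Hence the set of admissible bearing data is not closed, and pointwise positivity on it does not yield a single $\mu > 0$ by compactness; you would need an additional uniform bound on the depth-to-$|x|$ ratio to make that argument work. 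The paper avoids this by a different (if terser) mechanism: it bounds $\lambda_{\max}(\bar{H}_t)$ using boundedness of the unit-vector entries of $\bar{C}_t$, and reads the lower bound $\lambda_{\min}(\bar{H}_t) \ge \bar{\epsilon}$ directly off the uniform margins postulated in Assumption \ref{assump1} (``uniformly non-collinear'', ``uniformly distant'' from the horopter origin) together with $|x|\ge\epsilon$, rather than deriving it from a compactness argument over bearing configurations.
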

    
    \begin{proof}
    To prove that $\bar{C}_t$ is full rank under Assumption \ref{assump1}, it is sufficient to show that the equation $\bar{C}_t w = 0$, with $w = \left(w_1^\top, w_2^\top\right)^\top, w_1 \in \R^3, w_2 \in \R^2$, implies that the unique solution is $w = 0$. Then,
    \begin{align*}
        & \quad \bar{C}_t w = 0 \\
        \Rightarrow & \quad \e_3^\top \slashedring{p}_i^\times\bar{p}_i^\times w_1 -\e_2^\top \slashedring{p}_i^\times \bar{p}_i w_{2,1}  + \e_1^\top \slashedring{p}_i^\times \bar{p}_i w_{2,2} = 0,
    \end{align*}
    The relations $\mathring{P}_i = R P_i + x$  yield $\bar{p}_i = \frac{1}{|P_i|}(|\mathring{P}_i| \slashedring{p}_i - |x| e_3)$. Let $\underline{w}_2 = \begin{pmatrix} w_2 \\ 0 \end{pmatrix} \in \R^3$, it follows that for all $i=1, \dots, m$
    \begin{align}
    & \Rightarrow \; \e_3^\top \slashedring{p}_i^\times \left(|\mathring{P}_i| \slashedring{p}_i^\times w_1 + |x| \e_3^\times \left( \underline{w}_2 - w_1 \right)\right) = 0, \notag \\
    \begin{split}
    & \Rightarrow \; \exists \alpha_i \in \R \; \text{such that} \; \\
    & \quad\quad\quad\quad
    \slashedring{p}_i^\times \left(|\mathring{P}_i| \slashedring{p}_i^\times w_1 + |x| \e_3^\times \left( \underline{w}_2 - w_1 \right) \right)
    = \alpha_i \slashedring{p}_i^\times \e_3,
    \end{split}
    \notag \\
    & \Rightarrow \; \slashedring{p}_i^\times \left(|\mathring{P}_i| \slashedring{p}_i^\times w_1 + \bar{w} \right) = 0, \label{eqn:sysequations}
    \end{align}
    with $\bar{w} = |x|\e_3^\times \left(  \underline{w}_2 - w_1 \right) - \alpha_i \e_3$. It is clear that $w_2$ cannot be arbitrary since $|x|\geq \epsilon$ for all $t\geq 0$.
    
    The remaining proof, ensuring that $w = 0$ is the unique solution and hence $\bar{C}_t$ is full rank, is outlined in \cite[Section IV-B]{hamel2017riccati}.
    
    Using the fact that $\bar{C}_t$ is composed of bounded elements ($\e_j, \slashedring{p}_i, \bar{p}_i$ are elements of $\S^2$), one ensures  that $\bar{C}_t$ is bounded and hence the maximal eigen value $\lambda_{\max}(\bar{H}_t)$ is also bounded.
    Now, the uniformity stated in Assumption \ref{assump1} implies the existence of $\bar{\epsilon} > 0$ such that for all $t\geq 0$, $\lambda_{\min}(\bar{H}_t) \geq \bar{\epsilon}$. From there, one ensures that the condition number of $\bar{H}_t=\lambda_{\max}/\lambda_{\min}$ is bounded and hence $\bar{H}_t$ is well-conditioned.
    \end{proof}
\begin{theorem} \label{thm:observability} 
If Assumption \ref{assump1} holds and the linear velocity $v(t)$ is ``persistently exciting'' in the sense that there exists $\delta, \mu > 0$ such that $\forall t \geq 0$
\begin{equation} \label{eqn:pecond}
    \frac{1}{\delta} \int_t^{t+\delta} \frac{(Rv)^\top \Pi_{x} Rv}{|x|^2} ds \geq \mu.
\end{equation}
Then, the matrix pair $(\mathring{A}_t, \mathring{C}_t)$ is uniformly observable. Consequently, $\Sigma(t)$ and $\Sigma^{-1}(t)$ are uniformly bounded and the origin $\varepsilon(t) = 0$ is locally exponentially stable.
\end{theorem}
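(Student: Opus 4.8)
The plan is to apply the Morin sufficient condition of Lemma~\ref{morin} after stripping off the full-rank output factor. First, by Lemma~\ref{Cinvertible} (which requires, as do the boundedness hypotheses underlying the LTV framework, that $\inf_{t}|x(t)|\geq\epsilon>0$ and that the velocities are bounded; this also keeps $\mathring{A}_t,\mathring{C}_t$ bounded since $\mathring{v}=rSv=Sv/|x|$), the matrix $\bar{H}_t=\bar{C}_t^\top\bar{C}_t$ satisfies $\bar{H}_t\geq\bar{\epsilon}I_5$ for some $\bar{\epsilon}>0$. Since $\mathring{C}_t=\bar{C}_tL$ with $L=(I_5\;\,0_{5,1})$, we get $\mathring{C}_t^\top\mathring{C}_t=L^\top\bar{H}_tL\geq\bar{\epsilon}\,L^\top L$, and because the transition matrix $\mathring{\Phi}$ depends only on $\mathring{A}_t$, the Gramian \eqref{eqn:gramian0} dominates $\bar{\epsilon}$ times the observability Gramian of the pair $(\mathring{A}_t,L)$. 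Hence it suffices to prove that $(\mathring{A}_t,L)$ is uniformly observable; heuristically, $L$ already observes the entire rotational error and two of the three scaled-translation error coordinates, so only the scale error coordinate $\varepsilon_6$ remains to be reconstructed, and this is precisely what the persistence-of-excitation hypothesis \eqref{eqn:pecond} must supply.

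Next I would run the Morin chain for $(\mathring{A}_t,L)$. Taking $N_0=L$ (constant) gives $N_1=N_0\mathring{A}_t=L\mathring{A}_t$, i.e.\ the first five rows of $\mathring{A}_t$ read off \eqref{eqn:linearizedmat}: rows $1$–$3$ of $N_1$ vanish in columns $4$–$6$, while rows $4$ and $5$ are $(0,0,0,\mathring{v}_3,0,\mathring{v}_2)$ and $(0,0,0,0,\mathring{v}_3,-\mathring{v}_1)$ with $\mathring{v}=rSv$. I then take $M(t)\in\R^{7\times6}$ to be the stack of the five rows of $N_0$ and these two rows of $N_1$, an admissible choice in Lemma~\ref{morin} ($p=7\geq1$, $n=6$). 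A Cauchy–Binet expansion over the seven maximal minors of $M(t)$ shows only four are nonzero, equal up to sign to $\mathring{v}_1$, $\mathring{v}_2$, $\mathring{v}_1\mathring{v}_3$, $\mathring{v}_2\mathring{v}_3$, so
\[
\det\big(M(t)^\top M(t)\big)=(1+\mathring{v}_3^2)(\mathring{v}_1^2+\mathring{v}_2^2)\geq\mathring{v}_1^2+\mathring{v}_2^2=|\Pi_{\e_3}\mathring{v}|^2 .
\]

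To match this with \eqref{eqn:pecond} I would use that the true lifted trajectory $(S,Q,r)$ satisfies $(R,x)=\phi_{(I_3,\e_3)}(S,Q,r)=(Q^\top S,\,r^{-1}Q^\top\e_3)$, hence $r=1/|x|$, $S=QR$, $Q^\top\e_3=x/|x|$; combined with $\Pi_{Qu}=Q\Pi_uQ^\top$ for orthogonal $Q$ and $\Pi_{x/|x|}=\Pi_x$, this gives $\Pi_{\e_3}\mathring{v}=\tfrac1{|x|}Q\,\Pi_xRv$ and therefore $|\Pi_{\e_3}\mathring{v}|^2=\frac{(Rv)^\top\Pi_xRv}{|x|^2}$. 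Substituting into the minor bound and invoking \eqref{eqn:pecond}, $\frac1\delta\int_t^{t+\delta}|\det(M(s)^\top M(s))|\,ds\geq\mu$ for all $t\geq0$; by Lemma~\ref{morin} (with $\bar\delta=\delta$, $\bar\mu=\mu$) the Gramian of $(\mathring{A}_t,L)$ satisfies \eqref{eqn:positivegramian}, hence so does that of $(\mathring{A}_t,\mathring{C}_t)$ by the first step, i.e.\ $(\mathring{A}_t,\mathring{C}_t)$ is uniformly observable. Finally, with $\mathring{A}_t,\mathring{C}_t$ bounded and $M_t\in\S^+(6)$, $N_t\in\S^+(m)$ bounded and uniformly positive definite, \cite[Corollary~3.2]{hamel2017riccati} applied to \eqref{eq: riccati} delivers the uniform bounds on $\Sigma(t)$ and $\Sigma^{-1}(t)$ and the local exponential stability of $\varepsilon(t)=0$.

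The main obstacle is the middle step: correctly extracting the block structure of $\mathring{A}_t$, choosing the right rows of $N_0$ and $N_1$, and pushing the Cauchy–Binet computation through to the closed form $(1+\mathring{v}_3^2)|\Pi_{\e_3}\mathring{v}|^2$. The ensuing identity $|\Pi_{\e_3}\mathring{v}|^2=(Rv)^\top\Pi_xRv/|x|^2$ is the conceptual heart but is a one-line computation once the origin parameterization $(R,x)=\phi_{(I_3,\e_3)}(S,Q,r)$ is used; the output reduction in the first step and the Riccati-stability conclusion in the last step are routine.
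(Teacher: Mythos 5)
Your proposal is correct and follows essentially the same route as the paper: reduce the Gramian of $(\mathring{A}_t,\mathring{C}_t)$ to that of $(\mathring{A}_t,L)$ via the uniform positivity of $\bar{H}_t$ from Lemma \ref{Cinvertible}, apply Lemma \ref{morin} with $N_0=L$, $N_1=L\mathring{A}_t$, bound the determinant below by $\mathring{v}_1^2+\mathring{v}_2^2=(Rv)^\top\Pi_x Rv/|x|^2$ using the origin parameterization, and conclude via \cite[Corollary 3.2]{hamel2017riccati}. The only differences are cosmetic — you keep a $7\times 6$ selection of rows and evaluate the minor sum by Cauchy–Binet, whereas the paper stacks all ten rows of $N_0,N_1$ and uses a block-determinant identity, both yielding the same lower bound (and you are right to flag explicitly the $|x|\geq\epsilon$ and boundedness hypotheses that Lemma \ref{Cinvertible} needs).
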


\begin{proof}
    Taking into account the Gramian \eqref{eqn:gramian0} and given that $\bar{H}_t$ is well-conditioned under Assumption \ref{assump1}, one deduces that $W(t, t + \delta) \geq \lambda_{\mathrm{min}}(\bar{H}_t)\bar{W}(t, t + \delta)$, where $\bar{W}(t, t+ \delta)$ is the observability Gramian of the pair $(\mathring{A}_t, L)$ given by
    \begin{equation} \label{eqn:partialgramian}
         \bar{W}(t, t+ \delta) = \frac{1}{\delta} \int_t^{t+\delta} \mathring{\Phi}^\top(s,t)L^\top L\mathring{\Phi}(s,t) ds 
     \end{equation}
    Therefore, ensuring the uniform observability of the pair $(\mathring{A}_t, \mathring{C}_t)$ amounts to that of $(\mathring{A}_t, L)$.
    We show thereafter that condition \eqref{eqn:pecond} is sufficient for \eqref{eqn:partialgramian} to satisfy \eqref{eqn:positivegramian}. Applying Lemma \ref{morin}, we define the matrix-valued function $\bar{M}(t) := \left(N_0^\top \; N_1^\top \right)^\top$, with $N_0 = L$, $N_1(t) = L \mathring{A}_t$.
    One verifies using the expressions of $\mathring{A}_t$ and $L$ that
    \begin{equation} \label{eqn:matrixM}
        \bar{M}(t) = \begin{pmatrix}
           I_3 & 0_{3,2} & 0_{3,1} \\
            0_{2,3} & I_2 & 0_{2,1} \\
            (\e_3^\times \mathring{v})^\times & 0_{3,2} & 0_{3,1} \\
            0_{2,3} & \mathring{v}_3 I_2 & \begin{bmatrix}
            \mathring{v}_2 \\ -\mathring{v}_1
        \end{bmatrix}
        \end{pmatrix}
    \end{equation}
    Thus, one deduces
    \begin{align*}
        &    \det\left( \bar{M}(t)^\top \bar{M}(t) \right) \\ &= \det\begin{pmatrix}
                I_3 - (\e_3^\times \mathring{v})^\times (\e_3^\times \mathring{v})^\times & 0_{3,2} & 0_{3,1} \\
                0_{2,3} & (1 + \mathring{v}_3^2) I_2 & \mathring{v}_3 \begin{bmatrix}
                    \mathring{v}_2  \\ -\mathring{v}_1 
                \end{bmatrix} \\
                0_{1,3} & \mathring{v}_3 \begin{bmatrix}
                \mathring{v}_2 & -\mathring{v}_1
            \end{bmatrix} &  |\e_3^\times \mathring{v}|^2
            \end{pmatrix} \\ 
            &= \det\left(I_3  + |\e_3^\times \mathring{v}|^2 \Pi_{\e_3^\times \mathring{v}}  \right) (1+\mathring{v}_3^2)|\e_3^\times \mathring{v}|^2 \\
            &= (1 + |\e_3^\times \mathring{v}|^2)^2 (1+\mathring{v}_3^2) |\e_3^\times \mathring{v}|^2
            \geq |\e_3^\times \mathring{v}|^2. 
        \end{align*}
        Since $\e_3 = r Q x$, $\mathring{v} = r S v$, $R= Q^\top S$ and $|x| = r^{-1}$, this can be equivalently rewritten as
    $$ \det\left( \bar{M}(t)^\top \bar{M}(t) \right) \geq \frac{1}{|x|^2}(Rv)^\top \Pi_{x} Rv. $$ 
    Then, using \eqref{eqn:pecond}, one concludes that $(\mathring{A}_t, L)$, and hence $(\mathring{A}_t, \mathring{C}_t)$, is uniformly observable. The remainder of the proof follows from Theorem 3.1 and Corollary 3.2 in \cite{hamel2017riccati}.
\end{proof}

\begin{remark}
The persistence of excitation condition \eqref{eqn:pecond} is compromised when the linear velocity of the camera motion in the reference frame aligns with the position vector (linear motion along the straight line passing through the centers of the camera frames).
\end{remark}

\section{Simulation results}\label{sec:simresults}

To evaluate the performance of the proposed observer, we present simulation results based on a discretized version.
The simulated scenario involves five landmarks satisfying Assumption \ref{assump1} and is divided into three stages of motion to illustrate the observability conditions outlined in section \ref{sec:observability}:

(a) During $t \in [0,1]$, the camera maintains a static position with the centers of the camera frames aligned on the $z$-axis, such that $x(t) = (0, 0, 1)^\top$.

(b) During $t \in [1,4]$, the camera oscillates periodically along the $z$-axis with velocity $\left(0, 0, \frac{1}{2}\sin(\pi t) \right)^\top$ in the frame $\{\mathring{\calC}\}$, violating the persistence of excitation condition \eqref{eqn:pecond}.

(c) During $t \in [4,8]$, it traces a circular path in the $xy$-plane with velocity $\left(\sin(\pi t), -\cos(\pi t), 0\right)^\top$.
 This motion satisfies the \emph{p.e.} condition \eqref{eqn:pecond} and leads to full observability.

In phases (b) and (c), the angular velocity is set as $\Omega(t) = \frac{\pi}{20}\left(\cos(t), 2\cos(2t), 5\cos(2t) \right)^\top$.
The initial estimates are set as follows: $\hat{S}(0)$ corresponds to errors in roll, pitch, and yaw of $45(\mathrm{deg})$, $\hat{Q}(0)$ corresponds to errors in roll and pitch of $30(\mathrm{deg})$ and $\hat{r}(0) = 0.5(\mathrm{m})$.
The chosen observer parameters are: $\Sigma_0 = \diag(I_5, 5)$, $N_t = 0.01 I_5$ and $M_t = \diag(0.01I_5, 0.01\alpha)$, with $\alpha = (\hat{R}v)^\top \Pi_{\hat{x}} \hat{R}v$. This choice of $M_t$ ensures that $\Sigma(t)$ remains well-conditioned even when the velocity is not persistently exciting.

Fig. \ref{fig:1} shows the convergence of the estimation errors of the position direction $x_d = x/|x|$ and orientation $R$. Fig. \ref{fig:2} illustrates the estimation error of the range $|x|$ and the Lyapunov function $\mathcal{L}(t) := \varepsilon^\top \Sigma^{-1} \varepsilon$ of the linearized state error across the three stages of motion. 

The estimation errors for the position direction and the orientation rapidly converge to zero in phase (a), without requiring motion as sufficient landmarks are satisfying Assumption \ref{assump1}. Whereas, the range component does not converge in phases (a) and (b) because the velocity signal is not persistently exciting, resulting in a loss of uniform observability. 
When the persistence of excitation is fulfilled in phase (c), both the estimation error for the range and the Lyapunov function value rapidly converge to zero. These results confirm the exponential stability of the full state error. 

\begin{figure}[h]
        \centering
        \includegraphics[scale=.45]{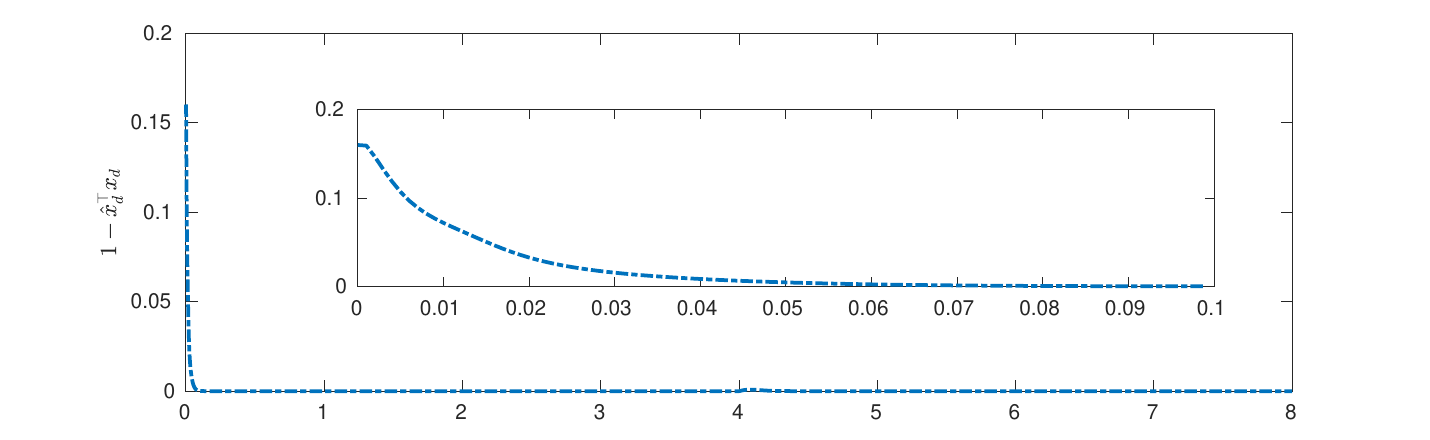}
        \includegraphics[scale=.45]{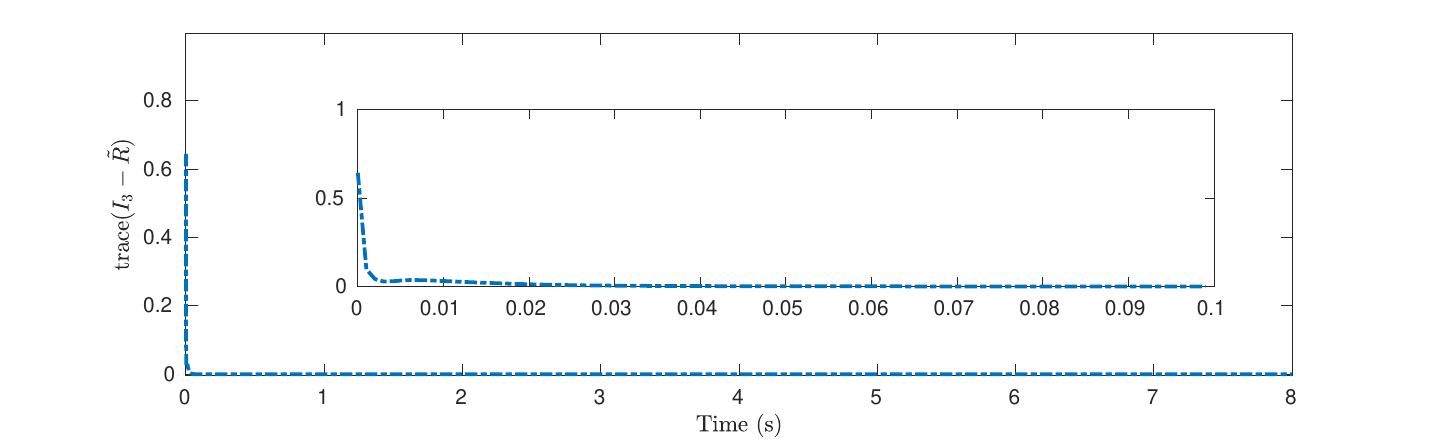}
        \caption{Estimation errors on the position direction and orientation.}
        \label{fig:1}
    \end{figure}
    \begin{figure}[h]
        \centering
        \includegraphics[scale=.45]{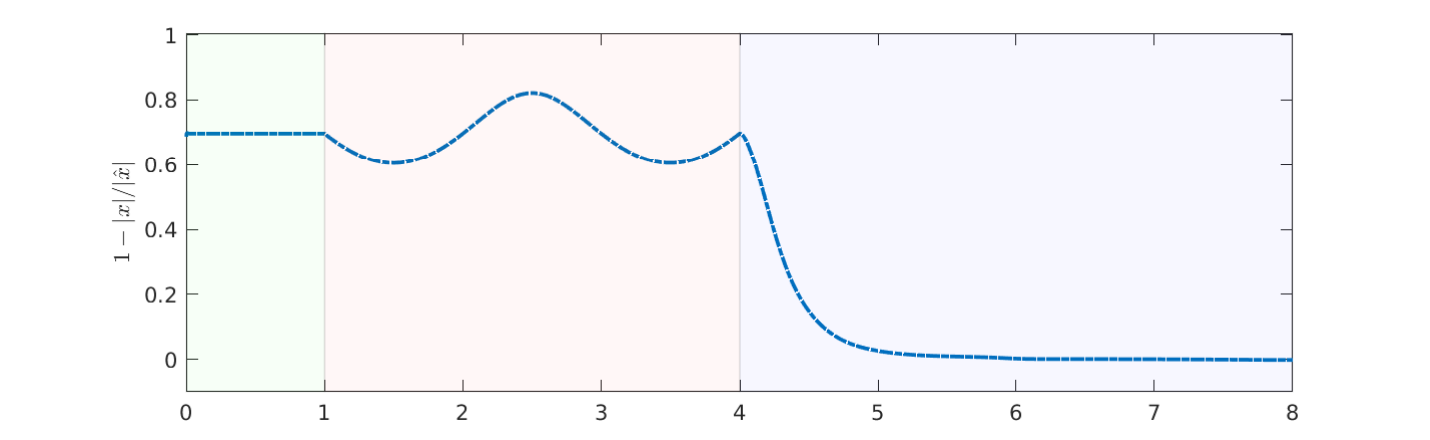}
        \includegraphics[scale=.45]{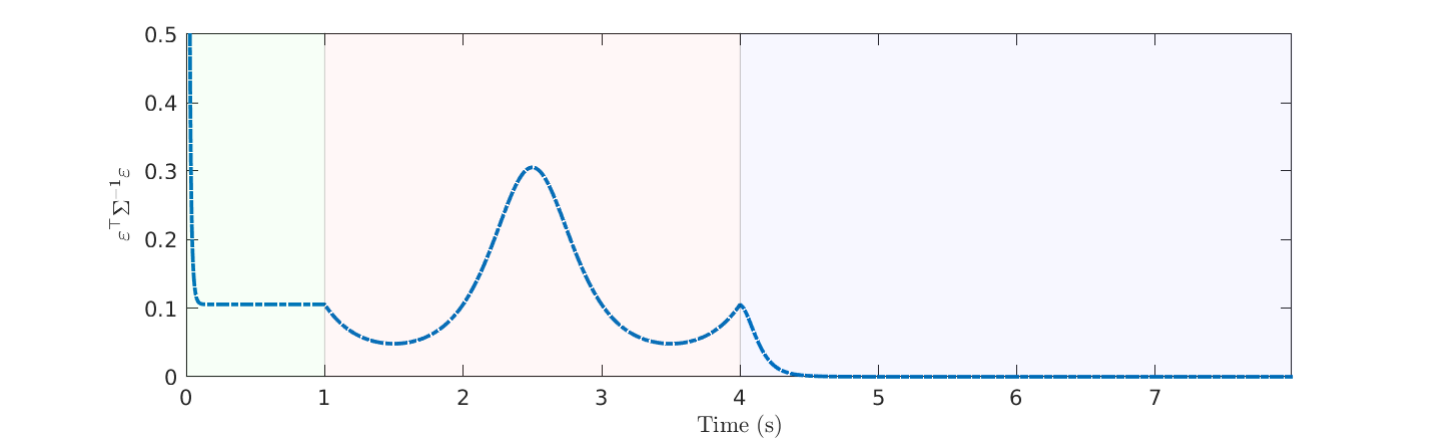}
        \caption{Estimation error on the position range and the Lyapunov function value during motion phases: (a) in green, (b) in red and (c) in blue.}
        \label{fig:2}
    \end{figure}

    \section{Conclusions} \label{sec:conclusion}

    We presented an equivariant observer design to estimate relative pose using epipolar geometry and velocity measurements. The approach is based on a novel \emph{polar symmetry} employed to parametrize 3D pose, efficiently decoupling the bearing and range components and naturally aligning with the scale-invariance of the epipolar constraint.
    Comprehensive observability and stability analyses were carried out in support of the proposed observer establishing an explicit persistence of excitation condition to ensure the uniform observability of the range component.
    The provided simulations validate the theoretical results and illustrate the performance of the proposed approach.


    \section*{Acknowledgment}
    This work has been supported by the French government, through the EUR DS4H Investments in the Future project managed by the National French Agency (ANR) with the reference number ANR-17-EURE-0004, the ANR-ASTRID Project ASCAR, the Franco-Australian International Research Project “Advancing Autonomy for Unmanned Robotic Systems” (IRP ARS) and the Australian Research Council through Discovery Grant DP210102607 ``Exploiting the Symmetry of Spatial Awareness for 21st Century Automation''.


\bibliographystyle{unsrtnat}
\bibliography{ref}

\appendix


\section{Proofs}
\label{a: proofs}
\begin{proof}[Proof of Proposition \ref{thm:symphi}] 
The identity property $\phi\left( (I_3, I_3, 1), (R, x) \right) = (R, x) $ is straightfoward to verify for any $(R,x) \in \calM$. Let $(R,x) \in \calM$ and $(S_1, Q_1, r_1),(S_2, Q_2, r_2) \in \grpG$ be arbitrary. Then
    \begin{align*}
        \phi\left( (S_2, Q_2, r_2), \phi\left( (S_1, Q_1, r_1), (R, x) \right) \right) 
        &= \phi\left((S_2, Q_2, r_2), \left(Q_1^\top R S_1, r_1^{-1} Q_1^\top x\right) \right), \\
        &= \left( Q_2^\top Q_1^\top R S_1 S_2, r_2^{-1}Q_2^\top\left(r_1^{-1} Q_1^\top x\right) \right), \\
        &= \left( (Q_1 Q_2)^\top R S_1 S_2, (r_1 r_2)^{-1}(Q_1 Q_2)^\top x \right), \\
        &= \phi\left( (S_1 S_2, Q_1 Q_2, r_1 r_2), (R,x) \right), \\
        &= \phi\left( (S_1, Q_1, r_1)(S_2, Q_2, r_2), (R,x) \right),
    \end{align*}
so it satisfies compatibility and it follows that $\phi$ is a right action. 
The transitivity follows from the property that $\phi((S,Q,r), (I_3, x_0)) = (S^\top Q, r^{-1} Q^\top x_0)$ and hence it is straightforward to see any point in $\calM$ can be reached from $(I_3, x_0)$ by suitable construction of an element of $\grpG$. This completes the proof.
\end{proof}

\begin{proof}[Proof of Proposition \ref{thm:sympsi}] 
The identity property $\psi\left( (I_3, I_3, 1), (\Omega, v) \right) = (\Omega, v) $ is straightfoward to verify for any $(\Omega,v) \in \vecV$. Let $(\Omega,v) \in \vecV$ and $(S_1, Q_1, r_1),(S_2, Q_2, r_2) \in \grpG$ be arbitrary. Then
    \begin{align*}
        \psi\left( (S_2, Q_2, r_2), \psi\left( (S_1, Q_1, r_1), (\Omega, v) \right) \right) 
        &= \psi\left((S_2, Q_2, r_2), \left( S_1^\top \Omega, r_1^{-1}S_1^\top v \right) \right), \\
        &= \left( S_2^\top S_1^\top \Omega, r_2^{-1}S_2^\top\left(r_1^{-1}S_1^\top v\right) \right), \\
        &= \left( (S_1 S_2)^\top \Omega, (r_1r_2)^{-1}(S_1 S_2)^\top v \right), \\
        &= \psi\left( (S_1 S_2, Q_1 Q_2, r_1 r_2), (\Omega,v) \right), \\
        &= \psi\left( (S_1, Q_1, r_1)(S_2, Q_2, r_2), (\Omega,v) \right),
    \end{align*}
so it satisfies compatibility. This demonstrates that $\psi$ is a right action as required.
\end{proof}

\begin{proof}[Proof of Proposition \ref{thm:equivsys}]
Let $(S, Q, r) \in \grpG$, $(R, x) \in \calM$ and
$(\Omega, v) \in \vecV$ be arbitrary. Note that $\phi$ is linear in $R$ and $x$, so $\tD \phi_X$ acts on $f(\xi, u)$ the same way that $\phi_X$ acts on $\xi$. Therefore, one has
\begin{align*}
    \tD \phi_{(S,Q,r)} f\left((R,x), (\Omega,v)\right) 
    &= \tD \phi_X \left( R \Omega^\times, Rv  \right), \\
    &= \left( Q^\top R \Omega^\times S, r^{-1}Q^\top R v \right), \\
    &= \left( Q^\top R S S^\top \Omega^\times S, r^{-1} Q^\top R S S^\top v \right), \\
    &= \left( (Q^\top R S) (S^\top \Omega)^\times, (Q^\top R S)\left(r^{-1} S^\top v\right) \right), \\
    &= f \left( \phi((S,Q,r), (R,x)), \psi((S,Q,r), (\Omega, v)) \right).
\end{align*}
This proves that the kinematics \eqref{eq:kine} are equivariant under the symmetries $\phi$ and $\psi$.
\end{proof}

\begin{proof}[Proof of Proposition \ref{thm:symtheta}] 
    It is straightforward to verify that $\theta$ is a right action of $\grpG$ on $\calP$.
    To show the invariance of $h$, it is sufficient to show that the the component functions $h^i$ are invariant under the actions $\phi$ and $\theta$. Let $(S, Q, r) \in \grpG$, $(R, t) \in \calM$ and $(\mathring{p}_i, p_i) \in \calP_i$ be arbitrary. Then,
    \begin{align*}
        h^i\left(\phi\left((S,Q,r), (R, x)\right); \theta\left((S,Q,r),(\mathring{p}_i, p_i)\right) \right) 
        &= h^i\left(\left( Q^\top R S, r^{-1}Q^\top x\right);  \left(Q^\top \mathring{p}_i, S^\top p_i \right) \right), \\
        &= \left(Q^\top \mathring{p}_i\right)^\top \frac{\left(r^{-1}Q^\top x\right)^\times}{|r^{-1}Q^\top x|} (Q^\top R S)(S^\top p_i), \\
        &= \mathring{p}_i^\top Q Q^\top x_d^\times Q Q^\top R S S ^\top p_i, \\
        &=  \mathring{p}_i^\top x_d^\times R p_i = h^i((R,x); (\mathring{p}_i, p_i)).
    \end{align*}
    It follows that
    \begin{align*}
        &h\left(\phi\left((S,Q,r), (R, x)\right); \theta\left((S,Q,r),\mathbf{p}\right) \right) \\ 
        &= \left(h^1\left(\phi\left((S,Q,r), (R, x)\right); \theta\left((S,Q,r),\mathbf{p}_1\right) \right), \dots, h^m\left(\phi\left((S,Q,r), (R, x)\right); \theta\left((S,Q,r),\mathbf{p}_m\right) \right) \right)  \\
        &= \left( h^1\left((R, x); \mathbf{p}_1 \right), \dots, h^m\left((R, x); \mathbf{p}_m  \right) \right) \\
        &= h\left((R, x); \mathbf{p}  \right).
    \end{align*}
    This shows that, indeed, $h$ is invariant under the actions $\phi$ and $\theta$.
\end{proof}

\begin{proof}[Proof of Proposition \ref{thm:lift}]
Recall that $\phi((S,Q,r),(R, x) = \left(Q^\top RS,r^{-1}Q^\top x\right)$.
To find $\tD_{(S,Q,r) \vert (I_3,I_3,1)} \phi_\xi$ first choose $s \in \mathfrak{so}(3)$, $q \in \so(3)$ and $b \in \mathfrak{mr}(1)$, and then evaluate $\phi$ applied to $S=e^{ts}$, $Q=e^{tq}$ and $r = e^{tb}$:
	\begin{align*}
		\tD_{(S,Q,r) \vert (I_3,I_3,1)} \phi_\xi (s,q,b) 
  &= \ddt \phi((e^{ta},e^{tq},e^{tb}), (R,x)) \mid_{t=0}, \\
  &= \ddt  (e^{-tq}Re^{ts}, e^{-tb}e^{-tq}x) \mid_{t=0}, \\
  &= (-q R + Rs, -(bI_3 + q)x), 
	\end{align*}
and so
	\begin{align*}  
        \tD_{X\vert \id} \phi_\xi [\Lambda(\xi, u)] &= \left( \left( \frac{x^\times Rv}{|x|^2} \right)^\times R + R\left(\Omega - \frac{(R^\top x)^\times v}{|x|^2} \right)^\times , \frac{1}{|x|^2}\left( (x^\top Rv) I_3 + (x^\times Rv)^\times\right)x \right),
        \\
        &= \left( R R^\top \left( \frac{x^\times Rv}{|x|^2} \right)^\times R + R\left(\Omega - \frac{(R^\top x)^\times v}{|x|^2} \right)^\times , \frac{1}{|x|^2}\left( \langle x, Rv\rangle I_3 + x^\times (Rv)^\times - (Rv)^\times x^\times \right)x \right),\\
        &= \left(R \left( \frac{ R^\top x^\times R v }{|x|^2} + \Omega - \frac{(R^\top x)^\times v}{|x|^2} \right)^\times , \frac{1}{|x|^2}\left(  xx^\top Rv - x^\times x^\times Rv \right) \right), \\
        &= \left( R \left( \frac{ (R^\top x)^\times v }{|x|^2} + \Omega - \frac{(R^\top x)^\times v}{|x|^2} \right)^\times , \left( \frac{xx^\top}{|x|^2} - \frac{x^\times x^\times}{|x|^2} \right) Rv  \right),\\
        &= \left(R \Omega^\times , \left( \frac{xx^\top}{|x|^2} + \Pi_x \right) Rv \right),\\
        &= \left( R \Omega^\times , Rv \right) = f\left( \xi, u \right).
	\end{align*}
As required. 
Now to show that the lift $\Lambda$ is equivariant
	\begin{align*}
		&\mathrm{Ad}_{X^{-1}}(\Lambda(\xi,u)) \\
        &= \mathrm{Ad}_{X^{-1}}\left( \left(\Omega - \frac{(R^\top x)^\times v}{|x|^2} \right)^\times, -  \frac{\left(x^\times Rv\right)^\times}{|x|^2}, - \frac{ x^\top Rv}{|x|^2} \right),
		\\
        &= \left( S^\top\left( \Omega - \frac{(R^\top x)^\times v}{|x|^2} \right)^\times S, - Q^\top \left( \frac{x^\times Rv}{|x|^2} \right)^\times Q ,  \frac{(Q^\top x/r)^\top (Q^\top R S)(S^\top v/r)}{|Q^\top x/r|^2}\right),  \\
        &= \left(\left( S^\top\Omega - \frac{S^\top R^\top x^\times R v}{|x|^2} \right)^\times, - \left( \frac{ Q^\top x^\times Rv}{|x|^2} \right)^\times ,  \frac{(Q^\top x/r)^\top (Q^\top R S)(S^\top v/r)}{|Q^\top x/r|^2}\right), \\
        &= \left(\left( S^\top\Omega - \frac{S^\top R^\top Q Q^\top x^\times Q Q^\top R S S^\top v}{|x|^2} \right)^\times ,  - \left( \frac{ Q^\top x^\times Q Q^\top R S S^\top v}{|x|^2} \right)^\times , \frac{(Q^\top x/r)^\top (Q^\top R S)(S^\top v/r)}{|Q^\top x/r|^2}\right), \\
        &= \left(\left( S^\top\Omega - \frac{(Q^\top R S)^\top (Q^\top x/r)^\times (Q^\top R S)(S^\top v/r)}{|Q^\top x/r|^2} \right)^\times,  - \left( \frac{ (Q^\top x/r)^\times (Q^\top R S) (S^\top v/r)}{|Q^\top x/r|^2} \right)^\times , \frac{(Q^\top x/r)^\top (Q^\top R S)(S^\top v/r)}{|Q^\top x/r|^2}\right),  \\
        &= f\left( \left(Q^\top R S, r^{-1}Q^\top x\right), \left(S^\top \Omega, r^{-1}S^\top v \right) \right), \\
        &= f\left( \phi\left((S,Q,r),(R, t) \right), \psi\left((S,Q,r),(\Omega, v) \right) \right).
	\end{align*}
    As required.
\end{proof}

\end{document}